\setlist[enumerate,1]{label={\roman*)}}
\newbox{\myorcidaffilbox}
\sbox{\myorcidaffilbox}{\large\includegraphics[height=1.7ex]{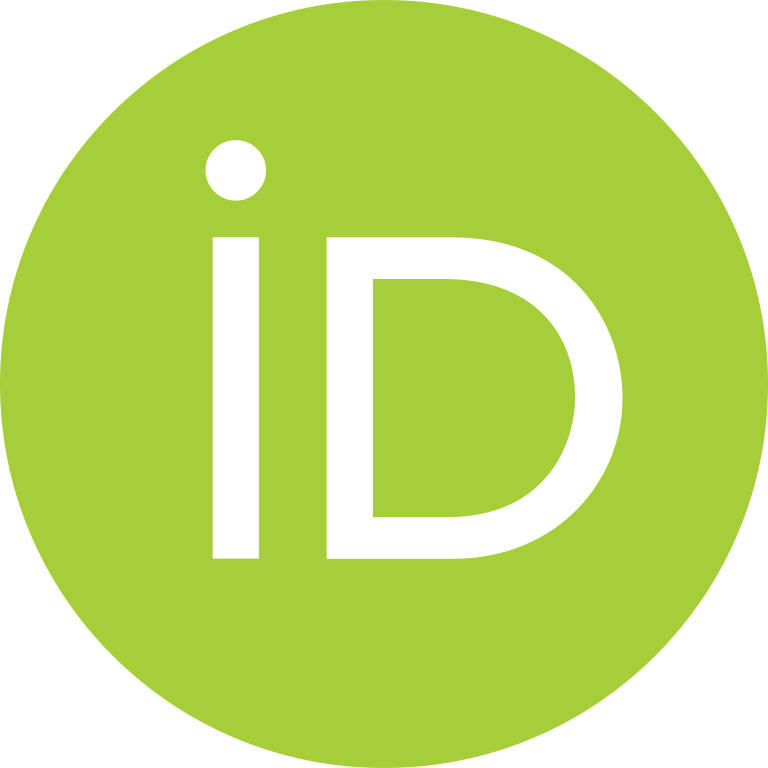}}
\newcommand{\orcid}[1]{%
  \href{https://orcid.org/#1}{\usebox{\myorcidaffilbox}}}
\theoremstyle{plain}
\newtheorem{theorem}{Theorem}
\newtheorem*{theorem*}{Theorem}
\newtheorem{lemma}[theorem]{Lemma}
\newtheorem*{lemma*}{Lemma}
\newtheorem{proposition}[theorem]{Proposition}
\newtheorem*{proposition*}{Proposition}
\newtheorem*{corollary*}{Corollary}
\theoremstyle{definition}
\newtheorem{definition}{Definition}
\newtheorem*{definition*}{Definition}
\newtheorem{example}{Example}
\newtheorem*{example*}{Example}
\theoremstyle{remark}
\newtheorem{remark}{Remark}
\newtheorem*{remark*}{Remark}
\newtheorem*{conjecture*}{Conjecture}
\newtheorem*{problem*}{Problem}
\newcommand*{\RR}{\mathbb{R}}
\newcommand*{\dd}{\mathrm{d}}
\newcommand*{\dds}{\dd_{S}}
\newcommand*{\ddst}{\dd_{\tilde S}}
\newcommand*{\contr}[1]{\iota_{#1}}
\newcommand*{\liedv}[1]{\mathcal{L}_{#1}}
\DeclareMathOperator{\Leg}{Leg}
\let\Im\Image
\DeclareMathOperator{\hor}{hor}
\renewcommand{\noeqref}[1]{\textcolor{red}{\Huge Numerar!}}
\title{Geometric Hamilton-Jacobi theory for systems with external forces}
\author[1,2]{\orcid{0000-0002-8028-2348} Manuel de León}
\author[1]{\orcid{0000-0002-2368-5853} Manuel Lainz}
\author[1]{\orcid{0000-0002-9620-9647} Asier López-Gordón\thanks{Author to whom correspondence should be addressed: \href{mailto:asier.lopez@icmat.es}{asier.lopez@icmat.es}}
}
\affil[1]{Instituto de Ciencias Matemáticas (CSIC-UAM-UC3M-UCM) \protect\\
Calle Nicolás Cabrera, 13-15, Campus Cantoblanco, UAM, 28049 Madrid, Spain}
\affil[2]{Real Academia de Ciencias Exactas, Físicas y Naturales\protect\\
Calle Valverde, 22, 28004, Madrid, Spain }
\date{\vspace{-5ex}}
\begin{document}

\maketitle

\begin{abstract}
\noindent
In this paper, we develop a Hamilton-Jacobi theory for forced Hamiltonian and Lagrangian systems. We study the complete solutions, particularize for Rayleigh systems and present some examples. Additionally, we present a method for the reduction and reconstruction of the Hamilton-Jacobi problem for forced Hamiltonian systems with symmetry. Furthermore, we consider the reduction of the Hamilton-Jacobi problem for a \v{C}aplygin system to the Hamilton-Jacobi problem for a forced Lagrangian system. 
\end{abstract}






\section{Introduction}
The classical formulation \cite{goldstein_mecanica_1987,abraham_foundations_2008,arnold_mathematical_1978} of the Hamilton-Jacobi problem for a Hamiltonian system on $T^*Q$ consists in looking for a function $S$ on $Q\times \RR$, called the principal function (also known as the generating function), such that
\begin{equation}
  \frac{\partial S} {\partial t} + H \left(q^i, \frac{\partial S} {\partial q^i}  \right)  = 0, \label{eq_H-J_classical_S}
\end{equation}
where $H:T^*Q\to \RR$ is the Hamiltonian function. With the \emph{ansatz} $S(q^i, t) = W(q^i) - tE$, where $E$ is a constant, the equation above reduces to
\begin{equation}
  H \left(q^i, \frac{\partial W} {\partial q^i}  \right) = E, 
  \label{eq_H-J_classical_W}
\end{equation}
where $W:Q\to \RR$ is the so-called characteristic function. Both Eqs.~\eqref{eq_H-J_classical_S} and \eqref{eq_H-J_classical_W} are known as the Hamilton-Jacobi equation.

Despite the difficulties to solve a partial differential equation instead of a system of ordinary differential equations, i.e., to solve the Hamilton-Jacobi equation instead of Hamilton equations, the Hamilton-Jacobi theory provides a remarkably powerful method to integrate the dynamics of many Hamiltonian systems. In particular, for a completely integrable system, if one knows as much constants of the motion in involution as degrees of freedom of the system, one can obtain a complete solution of the Hamilton-Jacobi problem and completely solve the Hamiltonian system, or, in other words, reduce it to quadratures \cite{grillo_non-commutative_2021,grillo_extended_2021,grillo_hamiltonjacobi_2016}.

Geometrically, the Hamilton-Jacobi equation \eqref{eq_H-J_classical_W} can be written as
\begin{equation}
  \dd (H \circ W) = 0,
\end{equation}
where $\dd W$ is a 1-form on $Q$. This 1-form transforms the integral curves of a vector field $X_H^{\dd W}$ on $Q$ into integral curves of the dynamical vector field $X_H$ on $TQ$ (the latter satisfying Hamilton equations). This geometric procedure \cite{carinena_geometric_2006,abraham_foundations_2008} has been extended to many other different contexts, such as nonholonomic systems \cite{carinena_geometric_2006,carinena_geometric_2010,iglesias-ponte_towards_2008,ohsawa_nonholonomic_2011,leon_linear_2010}, singular Lagrangian systems \cite{de_leon_hamilton-jacobi_2013,de_leon_hamiltonjacobi_2012,leok_hamiltonjacobi_2012}, higher-order systems \cite{colombo_geometric_2014}, field theories \cite{de_leon_geometric_2008,de_leon_hamiltonjacobi_2014,vitagliano_geometric_2012,de_leon_hamiltonjacobi_2020,de_leon_hamilton-jacobi_2017,campos_hamilton-jacobi_2015,zatloukal_classical_2016} or contact systems \cite{de_leon_hamiltonjacobi_2021}. An unifying Hamilton-Jacobi theory for almost-Poisson manifolds was developed in reference \cite{de_leon_hamilton-jacobi_2014}. The Hamilton-Jacobi theory has also been generalized to Hamiltonian systems with non-canonical symplectic structures \cite{martinez-merino_hamiltonjacobi_2006}, non-Hamiltonian systems \cite{rashkovskiy_hamilton-jacobi_2020} locally conformally symplectic manifols \cite{esen_hamiltonjacobi_2021}, Nambu-Poisson and Nambu-Jacobi manifolds \cite{de_leon_geometric_2017,de_leon_geometric_2019}, Lie algebroids \cite{leok_dirac_2012} and implicit differential systems \cite{esen_hamiltonjacobi_2018}.
The applications of Hamilton-Jacobi theory include the relation between classical and quantum mechanics \cite{carinena_hamilton-jacobi_2009,marmo_hamilton-jacobi_2009,budiyono_quantization_2012}, information geometry \cite{ciaglia_hamilton-jacobi_2017,ciaglia_hamilton-jacobi_2017-1}, control theory \cite{sakamoto_analysis_2002} and the study of phase transitions \cite{kraaij_hamilton-jacobi_2021}.

In the same fashion, in this paper we develop a Hamilton-Jacobi theory for systems with external forces. This paper is the natural continuation of our previous paper about symmetries and constants of the motion of systems with external forces \cite{de_leon_symmetries_2021} (see also \cite{lopez-gordon_geometry_2021}).
Mechanical systems with external forces (so-called forced systems) appear commonly in engineering and describe certain physical systems with dissipation \cite{de_leon_symmetries_2021,lopez-gordon_geometry_2021,esen_geometrical_2021}. Moreover, they emerge after a process of reduction in a nonholonomic \v{C}aplygin system \cite{cantrijn_geometry_2002,garcia-naranjo_geometry_2020,ohsawa_nonholonomic_2011,bloch_nonholonomic_1996,koiller_reduction_1992,iglesias-ponte_towards_2008}. A particular type of external forces are the so-called Rayleigh forces \cite{de_leon_symmetries_2021,lopez-gordon_geometry_2021,bloch_euler-poincare_1996}, i.e., forces that an be written as the derivative of a ``potential'' with respect to the velocities. Forced systems on a Lie group have been studied in reference \cite{bloch_euler-poincare_1996}.

This paper is organized as follows. In Section \ref{section_preliminaries} we recall the geometric concepts we will make use of. In Section \ref{section_HJ_Hamiltonian} we develop a Hamilton-Jacobi theory for Hamiltonian systems with external forces. We consider the complete solutions, particularized for Rayleigh forces and discuss some examples. The analogous theory for Lagrangian systems with external forces is described in Section \ref{section_HJ_Lagrangian}. In Section \ref{section_reduction} we present a method for the reduction and reconstruction of solutions of the Hamilton-Jacobi problem for forced Hamiltonian systems with symmetry. Finally, Section \ref{section_Chaplygin} is devoted to the reduction of \v{C}aplygin systems to forced Lagrangian systems, in order to obtain solutions of the forced Hamilton-Jacobi problem and reconstruct solutions of the nonholonomic Hamilton-Jacobi problem.

\section{Preliminaries} \label{section_preliminaries}
This paper is a natural continuation of our previous paper \cite{de_leon_symmetries_2021}.
Let us briefly recall the notations and results we will make use of.
See also \cite{lopez-gordon_geometry_2021}.

\subsection{Semibasic forms and fibred morphims}\label{section_morphisms}

    Consider a fibre bundle $\pi: E\to M$. Let us recall \cite{leon_methods_1989,abraham_foundations_2008,godbillon_geometrie_1969} that a 1-form $\beta$ on $E$ is called \emph{semibasic} if 
    \begin{equation}
        \beta(Z)=0
    \end{equation}
    for all vertical vector fields $Z$ on $E$. If $(x^i,y^a)$ are fibred (bundle) coordinates, then the vertical vector fields are locally generated by $\{\partial/\partial y^a\}$. So $\beta$ is a semibasic 1-form if it is locally written as
    \begin{equation}
        \beta=\beta_i(x,y) \dd x^i.
    \end{equation}

We shall particularize this definition for the cases of tangent and cotangent bundles. In what follows, let $Q$ be an $n$-dimensional differentiable manifold. Given a morphism of fibre bundles
\begin{center}
\begin{tikzcd}
 TQ \arrow[rr, "D"] \arrow[rd, "\tau_q"'] &   & T^*Q \arrow[ld, "\pi_Q"] \\
                                                          & Q &                         
\end{tikzcd},
\end{center}
one can define an associated semibasic 1-form \cite{godbillon_geometrie_1969,leon_methods_1989} $\beta_D$ on $TQ$ by 
\begin{equation}
    \beta_D(v_q)(u_{v_q})=\left\langle D(v_q),T\tau_Q(u_{v_q})\right\rangle,
\end{equation}
where $v_q\in T_qQ,\ u_{v_q}\in T_{v_q}(TQ)$.

If locally $D$ is given by
\begin{equation}
    D(q^i,\dot q^i)=(q^i,D_i(q,\dot q)),
\end{equation}
then
\begin{equation}
    \beta_D=D_i(q,\dot q)\dd q^i.
\end{equation}

Conversely, given a semibasic 1-form $\beta$ on $TQ$, one can define the following morphism of fibre bundles:
\begin{equation}
\begin{aligned}
    &D_\beta: TQ\to T^*Q,\\
    &\left\langle D_\beta(v_q),w_q\right\rangle=\beta(v_q)(u_{w_q}),
\end{aligned}
\end{equation}
for every $v_q,w_q\in T_qQ,\ u_{w_q}\in T_{w_q}(TQ)$, with $T\tau_Q(u_{w_q})=w_q$. In local coordinates, if
\begin{equation}
\beta=\beta_i(q,\dot q)\dd q^i,
\end{equation}
then
\begin{equation}
    D_\beta(q^i,\dot q^i)=\left(q^i,\beta_i(q^i,\dot{q}^i)\right).
\end{equation}
Here $(q^i, \dot{q}^i)$ are bundle coordinates in $TQ$.

So there exists a one-to-one correspondence between semibasic 1-forms on $TQ$ and fibred morphisms from $TQ$ to $T^*Q$.

\subsection{Hamiltonian mechanics}
An external force is geometrically interpreted as a semibasic 1-form on $T^*Q$. A Hamiltonian system with external forces (so called \emph{forced Hamiltonian system}) $(H,\beta)$ is given by a Hamiltonian function $H:T^*Q\to \mathbb{R}$ and a semibasic 1-form $\beta$ on $T^*Q$. Let $\omega_Q=-\dd \theta_Q$ be the canonical symplectic form of $T^*Q$.
Locally these objects can be written as
\begin{equation}
\begin{aligned}
&\theta_Q=p_i\dd q^i,\\
&\omega_Q=\dd q^i\wedge \dd p_i,\\
&\beta=\beta_i(q,p) \dd q^i,\\
&H=H(q,p),
\label{local_symplectic}
\end{aligned}
\end{equation}
where $(q^i,p_i)$ are bundle coordinates in $T^*Q$.

The dynamics of the system is given by the vector field $X_{H,\beta}$, defined by
\begin{equation}
\contr{X_{H,\beta}}\omega_Q=\dd H+\beta. \label{Hamiltonian_dynamics_eq}
\end{equation}
If $X_H$ is the Hamiltonian vector field for $H$, that is,
\begin{equation}
\contr{X_H}\omega_Q=\dd H, \label{Hamiltonian_VF}
\end{equation}
and $Z_\beta$ is the vector field defined by
\begin{equation}
\contr{Z_\beta}\omega_Q=\beta,
\end{equation}
then we have
\begin{equation}
X_{H,\beta}=X_H+Z_\beta.
\end{equation}
Locally, the above equations can be written as 
\begin{equation}
\begin{aligned}
&X_H=\frac{\partial  H} {\partial p_i }\frac{\partial  } {\partial  q^i}
-\frac{\partial H } {\partial  q^i} \frac{\partial  } {\partial  p_i}
 \label{Hamiltonian_VF_local},
\\
&\beta=\beta_i \dd q^i,\\
&Z_\beta=-\beta_i \frac{\partial  } {\partial p_i} ,\\
&X_{H,\beta}=\frac{\partial  H} {\partial p_i }\frac{\partial  } {\partial  q^i}-\left(\frac{\partial H } {\partial  q^i}+\beta_i\right)
\frac{\partial  } {\partial  p_i}.
\end{aligned}
\end{equation}
Then, a curve $(q^i(t), p_i (t)$ in $T^*Q$ is an integral curve of $X_{H,\beta}$ if and only if it satisfies the \emph{forced Hamilton equations}
\begin{equation}
\begin{aligned}
&\frac{\mathrm{d}q^i} {\mathrm{d}t}=\frac{\partial  H} {\partial p_i },\\
&\frac{\mathrm{d}p_i} {\mathrm{d}t}=-\left(\frac{\partial  H} {\partial q^i }+\beta_i\right).
\label{forced_Hamilton_eqs}
\end{aligned}
\end{equation}
Let us recall that the \emph{Poisson bracket} is the bilinear operation
\begin{equation}
\begin{aligned}
\left\{\cdot,\cdot  \right\}: C^\infty(M,\RR)\times C^\infty(M,\RR) &\to C^\infty(M,\RR)\\
\left\{f,g  \right\} &=\omega(X_f,X_g),
\end{aligned} \label{Poisson_bracket}
\end{equation}
with $X_f, X_g$ the Hamiltonian vector fields associated to $f$ and $g$, respectively.

\begin{definition}
Let $(H, \beta)$ be a forced Hamiltonian system on $T^*Q$. A function $f$ on $T^*Q$ is called a \emph{constant of the motion} (or a \emph{conserved quantity}) if
\begin{equation}
  X_{H, \beta}(f) = 0, \label{conserved_quantity_Hamiltonian}
\end{equation}
or, equivalently, $f$ is constant along the solutions of the forced Hamilton equations \eqref{forced_Hamilton_eqs}.
\end{definition}

\subsection{Lagrangian systems with external forces}
The \emph{Poincaré-Cartan 1-form} on $TQ$ associated with the Lagrangian function $L:TQ\to \RR$ is
\begin{equation}
  \theta_L=S^*(\dd L), \label{Poincare_Cartan_1}
\end{equation}
where $S^*$ is the adjoint operator of the vertical endomorphism on $TQ$, which is locally
\begin{equation}
  S = \dd q^i \otimes \frac{\partial  } {\partial  \dot{q}^i}. \label{vertical_endomorphism}
\end{equation}
The \emph{Poincaré-Cartan 2-form} is $\omega_L=-\dd \theta_L$, so locally
\begin{equation}
  \omega_L = \dd q^i \wedge \dd \left( \frac{\partial L} {\partial  \dot{q}^i}  \right). \label{Poincare_Cartan_2}
\end{equation}
One can easily verify that $\omega_L$ is symplectic if and only if $L$ is regular, that is, if the Hessian matrix
\begin{equation}
  \left( W_{ij}  \right) = \left( \frac{\partial^2 L  } {\partial \dot{q}^i \partial \dot{q}^j }  \right)
\end{equation}
is invertible. 
The \emph{energy} of the system is given by
\begin{equation}
  E_L=\Delta(L)-L,
\end{equation}
where $\Delta$ is the Liouville vector field,
\begin{equation}
  \Delta= \dot{q}^i \frac{\partial  } {\partial  \dot{q}^i}.
\end{equation}
Similarly to the Hamiltonian framework, an external force is represented by a semibasic 1-form $\alpha$ on $TQ$. Locally,
\begin{equation}
  \alpha = \alpha_i (q, \dot{q}) \dd q^i.
\end{equation}
The dynamics of the forced Lagrangian system $(L, \alpha)$ is given by
\begin{equation}
  \contr{\xi_{L,\alpha}} \omega_L = \dd E_L + \alpha, \label{dynamics_Lagrangian}
\end{equation}
that is, the integral curves of the \emph{forced Euler-Lagrange vector field} $\xi_{L,\alpha}$ satisfy the forced Euler-Lagrange equations:
\begin{equation}
\frac{\mathrm{d}}{\mathrm{d} t}\left(\frac{\partial L}{\partial \dot{q}^i}\right) 
  - \frac{\partial L}{\partial q^i}
  = -\alpha_i .
\label{forced_continuous_EL}
\end{equation}
We can write $\xi_{L, \alpha}=\xi_L + \xi_\alpha$, where
\begin{equation}
\begin{aligned} 
  &\contr{\xi_{L}} \omega_L = \dd E_L,\\
  &\contr{\xi_{\alpha}} \omega_L =  \alpha.
\end{aligned}
\end{equation}
This vector field is a \emph{second order differential equation} (\emph{SODE}), that is, 
\begin{equation}
  S(\xi_{L,\alpha}) = \Delta.
\end{equation}

\begin{definition}
Let $(L,\alpha)$ be a forced Lagrangian system on $TQ$. A function $f$ on $TQ$ is called a \emph{constant of the motion} (or a \emph{conserved quantity}) if
\begin{equation}
  \xi_{L, \alpha}(f) = 0, \label{conserved_quantity_Lagrangian}
\end{equation}
or, equivalently, $f$ is constant along the solutions of the forced Euler-Lagrange equations \eqref{forced_continuous_EL}.
\end{definition}

\subsection{Rayleigh forces}

An external force $\bar{R}$ is said to be \emph{of Rayleigh type} (or simply \emph{Rayleigh} for short) \cite{de_leon_symmetries_2021,lopez-gordon_geometry_2021} if there exists a function $\mathcal{R}$ on $TQ$ such that
\begin{equation}
  \bar{R}=S^*(\dd \mathcal{R}),
\end{equation}
which can be locally written as
\begin{equation}
  \bar{R} 
  = \bar R_i \dd q^i
  = \frac{\partial \mathcal{R}} {\partial \dot{q}^i} \dd q^i.
\end{equation}
This function $\mathcal{R}$ is called the \emph{Rayleigh dissipation function} (or the \emph{Rayleigh potential}). In other words, the fibred morphism $D_{\bar R}: TQ\to T^*Q$ associated with $\bar{R}$ is given by the fibre derivative of $\mathcal{R}$ (see reference \cite{bloch_euler-poincare_1996}), namely
\begin{equation}
   D_{\bar R} = \mathbb{F} \mathcal{R}: \left(q^i, \dot q^i  \right) 
   \mapsto \left(q^i, \frac{\partial \mathcal R} {\partial \dot q^i}  \right).
 \end{equation} 
A \emph{Rayleigh system} $(L,\mathcal{R})$ is a forced Lagrangian system with Lagrangian function $L$, and with external force $\bar{R}$ generated by the Rayleigh potential $\mathcal{R}$. For a Rayleigh system $(L, \mathcal R)$ with Rayleigh force $\bar R$, the forced Euler-Lagrange vector field is denoted by $\xi_{L, \bar{R}}$, given by
\begin{equation}
  \contr{\xi_{L,\bar R}} \omega_L = \dd E_L + \bar R
  = \dd E_L + S^*\left( \dd \mathcal R  \right).
  \label{dynamics_Rayleigh}
\end{equation}
This vector field can be written as $\xi_{L, \bar R}=\xi_L + \xi_{\bar R}$, where
\begin{equation}
  \contr{\xi_{\bar R}} \omega_L =  \bar R,
  \label{vf_Rayleigh}
\end{equation}
and the forced Euler-Lagrange equations \eqref{forced_continuous_EL} can be written as
\begin{equation}
  \frac{\mathrm{d}}{\mathrm{d} t}\left(\frac{\partial L}{\partial \dot{q}^i}\right) 
  - \frac{\partial L}{\partial q^i}
  = - \bar{R}_i
  = - \frac{\partial \mathcal R} {\partial \dot q^i}.
  \label{Euler_Lagrange_Rayleigh}
\end{equation}

\begin{remark}
If a Rayleigh potential $\mathcal{R}$ on $TQ$ defines a Rayleigh force $\bar{R}$ on $TQ$, $\mathcal{R}+f$ also defines $\bar{R}$, for any function $f$ on $Q$. In other words, given a Rayleigh force $\bar R$, its associated Rayleigh potential $\mathcal{R}$ is defined up to the addition of a basic function on $TQ$.
\end{remark}

The \emph{vertical differentiation} \cite{leon_methods_1989} $\dds$ on $T^*Q$ is given by
\begin{equation}
  \dds
  =[\contr{S}, d]
  = \contr{S} d - d \contr{S},
\end{equation}
where $\contr{S}$ denotes the \emph{vertical derivation}, given by
\begin{equation}
\begin{aligned}
    &\contr{S} f = 0,\\
    &(\contr{S} \omega) (X_1, \ldots, X_p) = \sum_{i=1}^p \omega(X_1, \ldots, S X_i, \ldots, X_p) ,
\end{aligned}
\end{equation}
for any function $f$, any $p$-form $\omega$ and any vector fields $X_1, \ldots, X_p$ on $TQ$. 
In particular,
\begin{equation}
  \dds f = S^* (\dd f),
\end{equation}
for any function $f$ on $TQ$. We can then write a Rayleigh force as
\begin{equation}
  \bar R = \dds {\mathcal {R}}.
\end{equation}

 A \emph{linear Rayleigh force} $\bar{R}$ is a Rayleigh force for which $\mathcal{R}$ is a quadratic form in the velocities, namely
\begin{equation}
  \mathcal{R} (q, \dot{q}) = \frac{1}{2} R_{ij} (q) \dot{q}^i \dot{q}^j,
\end{equation}
where $R_{ij}$ is symmetric and non-degenerate, and hence the Rayleigh force is
\begin{equation}
  \bar{R} = R_{ij}(q) \dot{q}^i \dd q^j.
\end{equation}
 In such a case, one can define an associated \emph{Rayleigh tensor} $R\in T^*Q \times T^*Q$, given by
\begin{equation}
  R = R_{ij} \dd q^i \otimes \dd q^j.
\end{equation}
 A \emph{linear Rayleigh system} $(L, R)$ is a Rayleigh system such that $\bar{R}$ is a linear Rayleigh force with Rayleigh tensor $R$.

The \emph{Legendre transformation} is a mapping $\Leg: TQ\to T^*Q$ such that the diagram
\begin{center}
\begin{tikzcd}
TQ \arrow[rr, "\Leg"] \arrow[rd, "\tau_q"'] &   & T^*Q \arrow[ld, "\pi_Q"] \\
                                                          & Q &                      
\end{tikzcd}
\end{center}
commutes. Here $\tau_q$ and $\pi_Q$ are the canonical projections on $Q$.
Locally,
\begin{equation}
  \Leg: (q^i,\dot{q}^i) \mapsto (q^i,p_i),
\end{equation}
with $p_i=\partial L/\partial \dot{q}^i$. In what follows, let us assume that the Lagrangian $L$ is \emph{hyperregular}, i.e., that $\Leg$ is a (global) diffeomorphism.

\subsection{Dissipative bracket}

\begin{definition}
    The \emph{dissipative bracket} is a bilinear map $[\cdot,\cdot]:C^\infty(TQ)\times C^\infty(TQ)\to C^\infty(TQ)$ given by
    \begin{equation}
      [f,g] = (SX_f)(g), \label{dissipative_bracket}
    \end{equation}
    where $S$ is the vertical endomorphism, and $X_f$ is the Hamiltonian vector field associated to $f$ on $(TQ,\omega_L)$, namely,
    \begin{equation}
      \contr{X_f} \omega_L = \dd f.
    \end{equation}
\end{definition}

\begin{proposition}
The dissipative bracket $[\cdot,\cdot]$ on $(TQ,\omega_L)$ verifies the following properties:
\begin{enumerate}
\item $[f,g]=[g,f]$ (it is symmetric),
\item $[f,gh]=[f,h]g+[f,g]h$ (``Leibniz rule''),
\end{enumerate}
for all functions $f,g$ on $TQ$.
\end{proposition}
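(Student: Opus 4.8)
The plan is to handle the two properties in turn, reducing each to an elementary fact about the vertical endomorphism $S$ and the Hamiltonian vector fields of the symplectic manifold $(TQ,\omega_L)$.

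For symmetry, the key ingredient is the identity $\omega_L(SX,Y)=\omega_L(SY,X)$ for all vector fields $X,Y$ on $TQ$, equivalently $\contr{S}\omega_L=0$ for the degree-zero derivation $\contr{S}$ induced by $S$ (recall that $(\contr{S}\omega_L)(X,Y)=\omega_L(SX,Y)+\omega_L(X,SY)$). This is a standard property of the Poincaré-Cartan $2$-form: I would verify it directly in bundle coordinates from $S(\partial/\partial q^i)=\partial/\partial\dot{q}^i$, $S(\partial/\partial\dot{q}^i)=0$, $\omega_L=\dd q^i\wedge\dd(\partial L/\partial\dot{q}^i)$ and the symmetry $W_{ij}=W_{ji}$ of the Hessian (alternatively, it follows from $\theta_L=S^{*}\dd L$ being semibasic). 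Granting this, and using the defining relation $\contr{X_f}\omega_L=\dd f$, one computes
\[
  [f,g]=(SX_f)(g)=\langle\dd g,SX_f\rangle=\omega_L(X_g,SX_f)=-\omega_L(SX_f,X_g)=-\omega_L(SX_g,X_f)=\omega_L(X_f,SX_g)=\langle\dd f,SX_g\rangle=(SX_g)(f)=[g,f],
\]
the central equality being precisely the identity above; equivalently, $(X,Y)\mapsto\omega_L(SX,Y)$ is symmetric, so $[f,g]=\langle\dds g,X_f\rangle=\langle\dds f,X_g\rangle$.

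For the Leibniz rule I would use that $h\mapsto X_h$ is a derivation over products, $X_{fg}=fX_g+gX_f$, which is immediate from $\dd(fg)=f\,\dd g+g\,\dd f$, the linearity of the assignment $X\mapsto\contr{X}\omega_L$ and the nondegeneracy of $\omega_L$; since $S$ is a $(1,1)$-tensor field, hence $C^{\infty}(TQ)$-linear, this gives $SX_{fg}=f\,SX_g+g\,SX_f$, and therefore
\[
  [fg,h]=(SX_{fg})(h)=f\,(SX_g)(h)+g\,(SX_f)(h)=f[g,h]+g[f,h]=[f,h]g+f[g,h],
\]
which is the asserted identity (by the symmetry already established it can be stated in either argument; the form with the product in the second slot is even more immediate, as $SX_f$ is a genuine vector field and hence a derivation on functions). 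I do not anticipate any real obstacle here: the only mildly non-routine point is the identity $\contr{S}\omega_L=0$, which becomes transparent in bundle coordinates, and the rest is bookkeeping --- symmetry of $[\cdot,\cdot]$ is the symmetry of the Hessian form $\omega_L(SX,Y)$, while the Leibniz property reflects the derivation behaviour of $X_{(\cdot)}$ over products together with the tensoriality of $S$.
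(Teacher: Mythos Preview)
Your argument is correct. The paper takes a different, more pedestrian route: it simply computes the bracket in bundle coordinates as
\[
  [f,g] = W^{ij}\,\frac{\partial f}{\partial \dot q^{j}}\,\frac{\partial g}{\partial \dot q^{i}},
\]
with $(W^{ij})$ the inverse Hessian of $L$, and reads off both properties from this expression. Your approach instead isolates the structural reason behind each property: symmetry is precisely the identity $\contr{S}\omega_L=0$ (equivalently, the symmetry of the bilinear form $(X,Y)\mapsto\omega_L(SX,Y)$, which in coordinates is just $W_{ij}=W_{ji}$), while the Leibniz rule follows from the derivation law $X_{fg}=fX_g+gX_f$ together with the $C^\infty$-linearity of the $(1,1)$-tensor $S$. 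The coordinate computation is shorter; your argument explains \emph{why} the bracket behaves this way and would transport unchanged to any regular almost-tangent setting.

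One small remark: the Leibniz identity as printed in the proposition, $[f,gh]=[f,h]g+f[g,h]$, appears to be a misprint---it does not follow from the local formula either, since that gives $[f,gh]=g[f,h]+h[f,g]$. What you actually establish is $[fg,h]=[f,h]g+f[g,h]$, and (as you note) the second-slot version $[f,gh]=g[f,h]+h[f,g]$ is immediate because $SX_f$ is a vector field. Both are genuine Leibniz rules, and they are equivalent via the symmetry you already proved; just be aware that the displayed formula in the statement is not literally the one your chain of equalities ends with.
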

\begin{proof}
In local coordinates,
\begin{equation}
  [f,g] = W^{ij} \frac{\partial f} {\partial \dot{q}^j} \frac{\partial g} {\partial \dot{q}^i},
\end{equation}
as one can derive from Eqs.~\eqref{vertical_endomorphism} and \eqref{Poincare_Cartan_2}.
Here $(W^{ij})$ is the inverse matrix of the Hessian matrix $(W_{ij})$ of the Lagrangian $L$. From this expression both assertions can be easily proven.
\end{proof}

Since the dissipative bracket is bilinear and verifies the Leibniz rule, it is a derivation, or a so-called Leibniz bracket \cite{ortega_dynamics_2004}.

\begin{proposition}
  Consider a Rayleigh system $(L,\mathcal{R})$ on $(TQ,\omega_L)$.
  A function $f$ on $TQ$ is a constant of the motion of $(L,\mathcal{R})$ if and only if
  \begin{equation}
    \left\{f,E_L  \right\} - [f,\mathcal{R}] = 0,
    \label{double_bracket_Rayleigh}
  \end{equation}
  where $\left\{\cdot,\cdot  \right\}$ is the Poisson bracket \eqref{Poisson_bracket} defined by $\omega_L$.
\end{proposition}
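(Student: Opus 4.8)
The plan is to characterize constants of the motion of the Rayleigh system $(L,\mathcal{R})$ by expressing the derivative of $f$ along the dynamical vector field $\xi_{L,\bar R}$ in terms of the two brackets. Recall that $f$ is a constant of the motion precisely when $\xi_{L,\bar R}(f) = 0$, where $\xi_{L,\bar R}$ is the forced Euler--Lagrange vector field defined by $\contr{\xi_{L,\bar R}}\omega_L = \dd E_L + \bar R$ with $\bar R = \dds\mathcal{R}$. So the whole proof reduces to showing the identity
\begin{equation}
  \xi_{L,\bar R}(f) = \{f, E_L\} - [f,\mathcal{R}]
\end{equation}
for every function $f$ on $TQ$.

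**Key steps.**

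First I would split $\xi_{L,\bar R} = \xi_L + Z_{\bar R}$, where $\xi_L$ is the (unforced) Euler--Lagrange vector field with $\contr{\xi_L}\omega_L = \dd E_L$ and $Z_{\bar R}$ is defined by $\contr{Z_{\bar R}}\omega_L = \bar R$. Then $\xi_{L,\bar R}(f) = \xi_L(f) + Z_{\bar R}(f)$. The first term is immediately $\xi_L(f) = \dd f(\xi_L) = \omega_L(X_f,\xi_L) = -\omega_L(\xi_L, X_f) = -\dd E_L(X_f) = -\{E_L,f\} = \{f,E_L\}$, using $\contr{X_f}\omega_L = \dd f$ and the definition of the Poisson bracket. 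The second term is the one requiring the Rayleigh structure: I would compute $Z_{\bar R}(f) = \dd f(Z_{\bar R}) = \omega_L(X_f, Z_{\bar R})$ and then use $\contr{Z_{\bar R}}\omega_L = \bar R = \dds\mathcal{R} = S^*(\dd\mathcal{R})$. Evaluating $\bar R$ on $X_f$ gives $(S^*\dd\mathcal{R})(X_f) = \dd\mathcal{R}(SX_f) = (SX_f)(\mathcal{R}) = [f,\mathcal{R}]$ by the definition of the dissipative bracket. Combining, $Z_{\bar R}(f) = \omega_L(X_f, Z_{\bar R}) = -\omega_L(Z_{\bar R}, X_f) = -\bar R(X_f) = -[f,\mathcal{R}]$, which yields the claimed formula. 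Alternatively, one can verify everything in local coordinates: $\xi_{L,\bar R}(f) = \{f,E_L\} - W^{ij}(\partial f/\partial\dot q^i)(\partial\mathcal{R}/\partial\dot q^j)$, and the last term is exactly $[f,\mathcal{R}]$ by the coordinate formula for the dissipative bracket proven in the previous proposition.

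**Main obstacle.**

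The computation is essentially routine once the decomposition $\xi_{L,\bar R} = \xi_L + Z_{\bar R}$ is in place; the only subtlety is bookkeeping the sign conventions, i.e., being careful with $\omega_L$ versus $-\omega_L$ and the order of arguments in $\omega_L(X_f, \cdot)$, since a single sign error would flip $[f,\mathcal{R}]$ to $-[f,\mathcal{R}]$. To avoid this I would pin down the convention from $\contr{X_f}\omega_L = \dd f$ together with $\{f,g\} = \omega_L(X_f,X_g)$ as stated in the excerpt, and then carry these consistently. A secondary point worth a line is that $\omega_L$ is symplectic (hence $X_f$ and $Z_{\bar R}$ are well defined) because $L$ is assumed regular. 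I expect no real difficulty beyond this; the proposition is really just the observation that the forced dynamics splits into a Hamiltonian part, detected by the Poisson bracket with $E_L$, and a dissipative part, detected by the dissipative bracket with $\mathcal{R}$.
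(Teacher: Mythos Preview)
Your proposal is correct and follows essentially the same route as the paper: both split the dynamics as $\xi_{L,\bar R}=\xi_L+Z_{\bar R}$, identify $\xi_L(f)=\{f,E_L\}$ via $\contr{X_f}\omega_L=\dd f$, and show $Z_{\bar R}(f)=-[f,\mathcal{R}]$ through the chain $\bar R(X_f)=(S^*\dd\mathcal{R})(X_f)=(SX_f)(\mathcal{R})$. The only cosmetic difference is direction: the paper starts from $[f,\mathcal{R}]$ and unwinds it to $-\xi_{\bar R}(f)$, whereas you start from $\xi_{L,\bar R}(f)$ and work toward the brackets.
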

\begin{proof}
As a matter of fact,
\begin{equation}
\begin{aligned} 
    [f,\mathcal{R}]
    &= (S X_f) (\mathcal{R})
    = \contr{SX_f} \dd \mathcal{R}
    = \contr{X_f} (S^* \dd \mathcal{R} )
    = \contr{X_f} \bar{R}\\
    &= \contr{X_f} \contr{\xi_{\bar{R}}} \omega_L
    = -\contr{\xi_{\bar{R}}} \contr{X_f} \omega_L
    = -\contr{\xi_{\bar{R}}} \dd f
    = -\xi_{\bar{R}}(f),
\end{aligned}
\end{equation}
and
\begin{equation}
   \left\{f,E_L  \right\}
   = \omega_L(X_f,\xi_L)
   = \contr{\xi_L}\contr{X_f}\omega_L
   = \xi_L(f),
\end{equation}
so
\begin{equation}\begin{aligned}
   \left\{f,E_L  \right\} - [f,\mathcal{R}]
   = \xi_L(f) + \xi_{\bar{R}}(f)
   = \xi_{L,\bar{R}} (f). \label{eq_proof_bracket_conserved}
\end{aligned}\end{equation}
Here $\xi_{L,\bar{R}}$ is the dynamical vector field given by Eq.~\eqref{dynamics_Rayleigh} and $\xi_{\bar R}$ is given by Eq.~\eqref{vf_Rayleigh}.
In particular by Eq.~\eqref{conserved_quantity_Lagrangian}, the right-hand side of Eq.~\eqref{eq_proof_bracket_conserved} vanishes if and only if $f$ is a constant of the motion.
\end{proof}

\begin{remark}
Other types of dissipative systems, particularly thermodynamical systems, exhibit a ``double bracket'' dissipation, i.e., their dynamics are described in terms of two brackets (in our case the Poisson bracket \eqref{Poisson_bracket} and the dissipative bracket \eqref{dissipative_bracket}). As a matter of fact, the dissipative bracket we defined above has certain similarities with other types of brackets.

The dissipative bracket $[\cdot,\cdot]$ defined above resembles the dissipative bracket $(\cdot,\cdot)$ appearing in the metriplectic framework \cite{coquinot_general_2020,birtea_asymptotic_2007}. Both brackets are symmetric and bilinear. However, the latter requires the additional assumption that $(E_L,f)$ vanishes identically for every function $f$ on $TQ$. Clearly, this requirement does not hold for our dissipative bracket.

On the other hand, our dissipative bracket $[\cdot,\cdot]$ can also be related with the so-called Ginzburg-Landau bracket $[\cdot,\cdot]_{\mathrm{GL}}$ \cite{grmela_dynamics_1997}. This bracket, together with symmetry and bilinearity, satisfies the positivity condition, i.e., $[f,f]_{\mathrm{GL}}\geq 0$ holds pointwisely for all $f$ on $TQ$. As a matter of fact, this holds for our bracket in the case of many relevant Lagrangians. For instance, consider a Lagrangian of the form 
\begin{equation}
  L = \sum_{i} m_i \left(\dot{q}^i\right)^2 - V(q),
\end{equation}
with positive masses $m_i$.
Then
\begin{equation}
  [f,f] = \sum_i m_i \left( \frac{\partial f} {\partial \dot q^i}  \right)^2\geq 0.
\end{equation}
See also reference \cite{bloch_euler-poincare_1996} for various types of systems with double bracket dissipation.
A further research on our dissipative bracket and its applications on thermodynamics will be done elsewhere.




\end{remark}

\subsection{Natural Lagrangians and Hamiltonian Rayleigh forces}

Consider a \emph{natural Lagrangian} $L$ on $TQ$, i.e., a Lagrangian function of the form
\begin{equation}
    L = \frac{1}{2} g_{ij}(q) \dot q^i \dot q^j - V(q), \label{quadratic_Lagrangian}
\end{equation}
where 
\begin{equation}
    g = g_{ij}(q) \dd q^i \otimes \dd q^j
\end{equation}
 is a (pseudo)Riemannian metric on $Q$.
  Clearly, $L$ is regular if and only if $g$ is non-degenerate. As it is well-known, these are the usual Lagrangians in classical mechanics. The Legendre transformation is now linear:
 \begin{equation}
 \begin{aligned}
     \Leg: \left(q^i, \dot{q}^i  \right)&\mapsto \left(q^i, g_{ij} \dot{q}^j  \right),\\
     \Leg^{-1}: \left(q^i, p_i  \right)&\mapsto \left(q^i, g^{ij} p_j  \right),  
 \end{aligned}
 \end{equation}
 where $(g^{ij})=(g_{ij})^{-1}$. In other words, the Legendre transformation consists simply in the ``raising and lowering of indices'' defined by the metric $g$.

Consider a linear Rayleigh system $(L, \mathcal{R})$, where $L$ is natural.
The associated \emph{Hamiltonian Rayleigh potential} $\tilde{\mathcal{R}}$ on $T^*Q$ is given by
\begin{equation}
    \tilde{\mathcal{R}} = \mathcal{R} \circ \Leg^{-1} = \mathcal{R} \circ g^{-1}.
\end{equation}
Similarly, the \emph{Hamiltonian Rayleigh force} $\tilde{R}$ on $T^*Q$ is given by
\begin{equation}
    \tilde{R} 
    = \left(\Leg^{-1}  \right)^* \bar R
    = \left(g^{-1}  \right)^* \bar R.
\end{equation}
When the Lagrangian is regular, the Legendre transformation is well-defined, so we can define a tensor field $\tilde S \in T(T^*Q)\otimes T^*Q$ given by
\begin{equation}
  \tilde{S} =\Leg_* S 
\end{equation}
In particular, if the Lagrangian is natural, then we have
\begin{equation}
  \tilde{S} = g_* S = g_{ij} \frac{\partial  } {\partial p_i} \otimes \dd q^j.
\end{equation}
Hence, the Hamiltonian Rayleigh force $\tilde{R}$ can be expressed in terms of the the Rayleigh potential $\tilde{\mathcal{R}}$ as
\begin{equation}
  \tilde R = \tilde{S}^*(\dd \tilde {\mathcal{R}}).
\end{equation}
We shall omit the adjective Hamiltonian and refer to $\tilde{\mathcal{R}}$ and $\tilde{R}$ as the Rayleigh potential and the Rayleigh force, respectively, if there is no danger of confusion.

Let us introduce the \emph{vertical differentiation} \cite{leon_methods_1989} $\dd _{\tilde S}$ on $T^*Q$ as
\begin{equation}
  \ddst = \contr{\tilde S}\dd -\dd \contr{\tilde S},
\end{equation}
where $\contr{\tilde S}$ is defined analogously to $\contr{S}$ by replacing $S$ with $\tilde S$.
In particular,
\begin{equation}
  \ddst f = \tilde S^*(\dd f),
\end{equation}
for any function $f$ on $T^*Q$. We can then write a Hamiltonian Rayleigh force as
\begin{equation}
  \tilde R = \ddst \tilde {\mathcal {R}}.
\end{equation}

Consider a linear Rayleigh system $(L, R)$, where $L$ is natural.
The associated Hamiltonian Rayleigh potential $\tilde{\mathcal{R}}$ on $T^*Q$ is given by
\begin{equation}
    \tilde{\mathcal{R}} 
                        = \frac{1}{2} R^{ij}(q) p_i p_j,
\end{equation}
where
\begin{equation}
    R^{ij}(q) = g^{ik} g^{jl} R_{kl} (q).
\end{equation}
Similarly, the Hamiltonian Rayleigh force $\tilde{R}$ on $T^*Q$ is given by
\begin{equation}
    \tilde{R} = \left(g^{-1}  \right)_* \bar R = R^i_j  p_i \dd q^j,
\end{equation}
where $R^i_j  = R_{kj} g^{ik}$. 
The associated Hamiltonian Rayleigh tensor $\hat{R}\in TQ \otimes T^*Q$ is given by
\begin{equation}
  \hat{R} = \left(g^{-1}  \right)_* R = R^i_j  \frac{\partial  } {\partial q^i} \otimes \dd q^j.
\end{equation}
The linear Hamiltonian Rayleigh force $\tilde{R}$ can thus be written as
\begin{equation}
  \tilde{R} = \hat{R}^* (\theta_Q).
\end{equation}
This motivates the next definition of linear Rayleigh forces in the Hamiltonian framework, without the need of considering natural Lagrangians, as follows.

\begin{definition}
An external force is called a \emph{linear Hamiltonian Rayleigh force} if it can be written as
\begin{equation}
  \tilde{R} = \hat{R}(\theta_Q)
\end{equation}
for some tensor $\hat{R}\in TQ \otimes T^*Q$. This tensor is called the \emph{Hamiltonian Rayleigh tensor}. A \emph{linear Hamiltonian Rayleigh system} $(H, \hat{R})$ is a forced Hamiltonian system whose external force is a linear Hamiltonian Rayleigh force. When there is no ambiguity, the adjective Hamiltonian will be omitted.
\end{definition}

\begin{remark}
Since $T^*Q$, unlike $TQ$, has not a canonical vertical endomorphism, there is not a natural way to define Hamiltonian non-linear Rayleigh forces, besides Legendre-transforming Lagrangian Rayleigh forces.
\end{remark}

\section{Hamilton-Jacobi theory for systems with external forces} \label{section_HJ_Hamiltonian}
Let $(H,\beta)$ be a forced Hamiltonian system on $T^*Q$. Its dynamical vector field $X_{H,\beta}$ is given by Eq.~\eqref{Hamiltonian_dynamics_eq}.
Given a 1-form $\gamma$ on $Q$ (i.e., a section of $\pi_Q:T^*Q\to Q$), it is possible to project $X_{H,\beta}$ along $\gamma(Q)$, obtaining the vector field 
\begin{equation}
  X_{H,\beta}^\gamma = T\pi_Q \circ X_{H,\beta} \circ \gamma \label{projected_vf}
\end{equation}

on $Q$, so that the following diagram commutes:

\begin{center}
\begin{tikzcd}
T^*Q \arrow[rrr, "{X_{H,\beta}}"] \arrow[dd, "\pi_Q"]                             &  &  & TT^*Q \arrow[dd, "T\pi_Q"] \\
                                                                                  &  &  &                            \\
Q \arrow[rrr, "{X_{H,\beta}^\gamma}"] \arrow[uu, "\gamma", bend left, shift left] &  &  & TQ                        
\end{tikzcd}

\end{center}

\begin{lemma}
The vector fields $X_{H,\beta}$ and $X_{H,\beta}^\gamma$ are $\gamma$-related if and only if $X_{H,\beta}$ is tangent to $\gamma(Q)$.
\end{lemma}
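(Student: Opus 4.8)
The plan is to exploit the general fact about $\gamma$-relatedness of vector fields: two vector fields $X$ on $T^*Q$ and $Y$ on $Q$ are $\gamma$-related (for a section $\gamma\colon Q\to T^*Q$) precisely when $T\gamma\circ Y = X\circ\gamma$. Since by construction $X_{H,\beta}^\gamma = T\pi_Q\circ X_{H,\beta}\circ\gamma$ and $\pi_Q\circ\gamma=\Id_Q$, the composite $T\gamma\circ T\pi_Q$ restricted to the image of $X_{H,\beta}\circ\gamma$ behaves like a projector onto $T(\gamma(Q))$ inside $TT^*Q$ along the vertical subbundle. So the equation $T\gamma\circ X_{H,\beta}^\gamma = X_{H,\beta}\circ\gamma$ should hold iff $X_{H,\beta}\circ\gamma$ already lies in $T(\gamma(Q))$, i.e. iff $X_{H,\beta}$ is tangent to $\gamma(Q)$.

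Concretely I would argue as follows. First, recall the characterization: $X_{H,\beta}$ and $X_{H,\beta}^\gamma$ are $\gamma$-related iff $T_q\gamma\bigl(X_{H,\beta}^\gamma(q)\bigr) = X_{H,\beta}(\gamma(q))$ for all $q\in Q$. Second, for the ``if'' direction, assume $X_{H,\beta}$ is tangent to $\gamma(Q)$, so $X_{H,\beta}(\gamma(q))\in T_{\gamma(q)}(\gamma(Q)) = \Image T_q\gamma$. Write $X_{H,\beta}(\gamma(q)) = T_q\gamma(v)$ for some $v\in T_qQ$; applying $T\pi_Q$ and using $\pi_Q\circ\gamma=\Id_Q$ gives $v = T\pi_Q\bigl(X_{H,\beta}(\gamma(q))\bigr) = X_{H,\beta}^\gamma(q)$, hence $X_{H,\beta}(\gamma(q)) = T_q\gamma(X_{H,\beta}^\gamma(q))$, which is the $\gamma$-relatedness condition. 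Third, for the ``only if'' direction, if the two fields are $\gamma$-related then $X_{H,\beta}(\gamma(q)) = T_q\gamma(X_{H,\beta}^\gamma(q))\in \Image T_q\gamma = T_{\gamma(q)}(\gamma(Q))$ for every $q$, so $X_{H,\beta}$ is tangent to $\gamma(Q)$.

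Optionally I would include a coordinate check to make the mechanism transparent: with $\gamma = \gamma_i(q)\,\dd q^i$, the submanifold $\gamma(Q)$ is $\{p_i = \gamma_i(q)\}$, tangency of $X_{H,\beta}$ means $\dot p_i = \frac{\partial\gamma_i}{\partial q^j}\dot q^j$ along $\gamma(Q)$, i.e. $-\bigl(\frac{\partial H}{\partial q^i}+\beta_i\bigr)\big|_{p=\gamma(q)} = \frac{\partial\gamma_i}{\partial q^j}\frac{\partial H}{\partial p_j}\big|_{p=\gamma(q)}$, and one checks this is exactly the condition for $T\gamma(X_{H,\beta}^\gamma)$ to agree with $X_{H,\beta}\circ\gamma$ in the $\partial/\partial p_i$ components (the $\partial/\partial q^i$ components always match because $T\pi_Q\circ T\gamma = \Id$).

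There is no real obstacle here: the statement is a general fact about projectable vector fields along a section and its graph, and the only thing to be careful about is bookkeeping with $T\gamma$, $T\pi_Q$ and the identity $\pi_Q\circ\gamma = \Id_Q$. The mild subtlety worth a sentence is that $\Image T_q\gamma$ is precisely the tangent space to the embedded submanifold $\gamma(Q)$ at $\gamma(q)$, which holds because $\gamma$ is an embedding (being a section of a vector bundle).
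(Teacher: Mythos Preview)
Your proof is correct and follows the same approach as the paper, which simply records the defining equation $T\gamma(X_{H,\beta}^\gamma)=X_{H,\beta}\circ\gamma$ for $\gamma$-relatedness and leaves the rest implicit. In fact your argument is more complete than the paper's: you spell out both directions using $\pi_Q\circ\gamma=\Id_Q$ and the identification $T_{\gamma(q)}(\gamma(Q))=\Image T_q\gamma$, which the paper does not bother to write down.
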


\begin{proof}
By definition, $X_{H,\beta}$ and $X_{H,\beta}^\gamma$ are $\gamma$-related if
\begin{equation}
  T\gamma (X_{H,\beta}^\gamma) = X_{H,\beta} \circ \gamma. \label{def_gamma_related}
\end{equation}
\end{proof}

Therefore, the integral curves of $X_{H,\beta}^\gamma$ are mapped to integral curves of $X_{H,\beta}$ (which satisfy the forced Hamilton equations \eqref{forced_Hamilton_eqs}) via $\gamma$. Indeed, if $\sigma$ is an integral curve of $X_{H,\beta}^\gamma$, then
\begin{equation}
  X_{H,\beta} \circ \gamma \circ \sigma = T\gamma \circ X_{H,\beta}^\gamma \circ \sigma
  = T\gamma \circ \dot{\sigma} = \dot{\overline{\sigma \circ \gamma}},
\end{equation}
so $\gamma\circ \sigma$ is an integral curve of $X_{H,\beta}$. Conversely, if  $\gamma \circ \sigma$ is an integral curve of $X_{H,\beta}$ for every integral curve $\sigma$ of $Y = T\pi_Q \circ X_{H,\beta} \circ \gamma$, then 
\begin{equation}
  T\gamma \circ Y \circ \sigma = T\gamma \circ \dot \sigma 
   = \dot{\overline{\sigma \circ \gamma}} = X_{H,\beta} \circ \gamma \circ \sigma,
\end{equation}
for every integral curve $\sigma$, and hence $X_{H,\beta}$ and $Y$ are $\gamma$-related.

From Eq.~\eqref{Hamiltonian_VF_local}, locally we have that
\begin{equation}
  T\gamma (X_{H,\beta}^\gamma)
   = \frac{\partial H} {\partial p_i} \frac{\partial  } {\partial q^i}
   +\frac{\partial H} {\partial p_i} \frac{\partial \gamma_j} {\partial q^i} \frac{\partial  } {\partial p_j}.
\end{equation}
Then Eq.~\eqref{def_gamma_related} yields
\begin{equation}
  \frac{\partial H} {\partial q^i} + \frac{\partial H} {\partial p_j} \frac{\partial \gamma_i} {\partial q^j} = -\beta_i \circ \gamma.
\end{equation}
In other words,
\begin{equation}
   \frac{\partial H} {\partial q^i} + \frac{\partial H} {\partial p_j} \frac{\partial \gamma_j} {\partial q^i}
   + \beta_i \circ \gamma 
   + \frac{\partial H} {\partial p_j} \left(\frac{\partial \gamma_i} {\partial q^j} 
   - \frac{\partial \gamma_j} {\partial q^i}  \right) = 0 ,
\end{equation}
that is
\begin{equation}
\begin{aligned}
  \dd (H\circ \gamma)  + \gamma^*\beta + \contr{X_{H,\beta}^\gamma} \dd \gamma = 0. 
\end{aligned}
\end{equation}
If $\gamma$ is closed, 
we have
\begin{equation}
  \frac{\partial H} {\partial q^i} + \frac{\partial H} {\partial p_j} \frac{\partial \gamma_j} {\partial q^i} = -\beta_i \circ \gamma,
\end{equation}
that is,
\begin{equation}
  \dd (H\circ \gamma) = -\gamma^*\beta. \label{HJ_eq}
\end{equation}
Let us recall that a \emph{Lagrangian submanifold} $\mathfrak{L}\subset T^*Q$ is a maximal isotropic submanifold, i.e., a submanifold such that $\omega_Q|_{S}=0$ and $\dim S = 1/2 \dim T^*Q = \dim Q$. Clearly, a 1-form $\gamma$ on $Q$ is closed if and only if $\Im \gamma$ is a Lagrangian submanifold. 

\begin{definition}
A 1-form $\gamma$ on $Q$ is called a \emph{solution of the Hamilton-Jacobi problem} for $(H,\beta)$ if:
\begin{enumerate}
\item it is closed,
\item it satisfies equation \eqref{HJ_eq}.
\end{enumerate}
This equation is known as the \emph{Hamilton-Jacobi equation}.
\end{definition}

The results above can be summarized in the following theorem.

\begin{theorem}\label{theorem_HJ}
Let $\gamma$ be a closed 1-form on $Q$. Then the following conditions are equivalent:
\begin{enumerate}
\item $\gamma$ is a solution of the Hamilton-Jacobi problem for $(H,\beta)$,
\item for every curve $\sigma: \RR \to Q$ such that
\begin{equation}
  \dot{\sigma}(t)=T \pi_{Q}\circ X_{H, \beta}\circ \gamma\circ \sigma(t),
   \label{eq_1_theorem_H-J}
\end{equation}
for all $t$, then $\gamma \circ \sigma$ is an integral curve of $X_{H, \beta}$;
\item $\Im \gamma$ is a Lagrangian submanifold of $T^*Q$ and $X_{H,\beta}$ is tangent to it.
\end{enumerate}
\end{theorem}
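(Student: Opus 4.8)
The plan is to reduce all three statements to a single intermediate condition — that $X_{H,\beta}$ and $X_{H,\beta}^\gamma$ are $\gamma$-related — and then assemble the equivalences from the material already established immediately before the theorem. First I would note that, since $\gamma$ is assumed closed, $\Im\gamma$ is automatically a Lagrangian submanifold of $T^*Q$ (as recalled just above the definition), so the only part of condition (iii) carrying content is the tangency of $X_{H,\beta}$ to $\Im\gamma$. By the Lemma preceding the theorem, $X_{H,\beta}$ is tangent to $\gamma(Q)$ if and only if $X_{H,\beta}$ and $X_{H,\beta}^\gamma$ are $\gamma$-related; this already yields the equivalence of (iii) with $\gamma$-relatedness.

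Next I would establish that (ii) is equivalent to $\gamma$-relatedness. The forward implication is precisely the computation displayed after the Lemma: if the two vector fields are $\gamma$-related, then for any curve $\sigma$ with $\dot\sigma = X_{H,\beta}^\gamma\circ\sigma$ one has $X_{H,\beta}\circ\gamma\circ\sigma = T\gamma\circ X_{H,\beta}^\gamma\circ\sigma = T\gamma\circ\dot\sigma$, which is the velocity of $\gamma\circ\sigma$, so $\gamma\circ\sigma$ is an integral curve of $X_{H,\beta}$. For the converse I would invoke existence of solutions of the ordinary differential equation defined by $X_{H,\beta}^\gamma$: through each point $q\in Q$ there is such a curve $\sigma$, and evaluating the identity $T\gamma\circ X_{H,\beta}^\gamma\circ\sigma = X_{H,\beta}\circ\gamma\circ\sigma$ at $q$, then letting $q$ range over $Q$, gives $\gamma$-relatedness pointwise everywhere.

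Finally, for the equivalence of (i) with $\gamma$-relatedness I would use the local coordinate computation already carried out just before the theorem. Writing out $T\gamma(X_{H,\beta}^\gamma)$ and $X_{H,\beta}\circ\gamma$, the $\partial/\partial q^i$-components agree by the very construction of $X_{H,\beta}^\gamma$, so $\gamma$-relatedness is equivalent to equality of the $\partial/\partial p_i$-components, i.e.\ to $\partial H/\partial q^i + (\partial H/\partial p_j)(\partial\gamma_i/\partial q^j) = -\beta_i\circ\gamma$. Since $\gamma$ is closed, $\partial\gamma_i/\partial q^j = \partial\gamma_j/\partial q^i$, and because $\beta$ is semibasic one has $\gamma^*\beta = (\beta_i\circ\gamma)\,\dd q^i$ and $\partial(H\circ\gamma)/\partial q^i = \partial H/\partial q^i + (\partial H/\partial p_j)(\partial\gamma_j/\partial q^i)$; hence this identity is exactly the Hamilton-Jacobi equation $\dd(H\circ\gamma) = -\gamma^*\beta$. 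Combined with the standing hypothesis that $\gamma$ is closed, this is condition (i). Chaining the three equivalences finishes the proof. The only delicate points are bookkeeping: the converse in (ii) genuinely requires existence of integral curves of $X_{H,\beta}^\gamma$ through every point, and the passage between $\partial\gamma_i/\partial q^j$ and $\partial\gamma_j/\partial q^i$ in the last step is justified precisely by closedness of $\gamma$ — which is exactly where that hypothesis (equivalently, that $\Im\gamma$ be Lagrangian) is used.
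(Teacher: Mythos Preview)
Your proposal is correct and follows essentially the same approach as the paper: the theorem is stated as a summary of the computations and the Lemma preceding it, and you have reassembled precisely those ingredients (the Lemma giving (iii) $\Leftrightarrow$ $\gamma$-relatedness, the integral-curve argument giving (ii) $\Leftrightarrow$ $\gamma$-relatedness, and the local computation with closedness giving (i) $\Leftrightarrow$ $\gamma$-relatedness). Your remarks about needing existence of integral curves through every point for the converse in (ii), and about closedness being exactly what allows the swap $\partial\gamma_i/\partial q^j = \partial\gamma_j/\partial q^i$, are correct and make explicit what the paper leaves implicit.
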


\begin{remark}
By the Hamilton-Jacobi equation \eqref{HJ_eq}, $\dd H + \beta$ vanishes on $\Im \gamma$, and hence
\begin{equation}
  \dd \beta|_{\Im \gamma} = 0.
\end{equation}
If $\dd \beta$ is non-degenerate, it is a symplectic form and $\Im \gamma$ is a Lagrangian submanifold on $(T^*Q, \dd \beta)$. In general, it is not easy to see whether $\dd \beta$ is non-degenerate or not. However, there is a type of external forces for which this is simple: the Rayleigh forces linear in the momenta. 

\begin{lemma}
Let $\hat R$ be a Hamiltonian Rayleigh tensor on $T^*Q$, and let $\tilde R$ be the associated Hamiltonian Rayleigh force on $T^*Q$. If $\hat{R}$ is non-degenerate, then $\dd \tilde{R}$ is a symplectic form on $T^*Q$.
\end{lemma}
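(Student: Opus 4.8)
The plan is to realize $\tilde R$ as the pullback of the canonical $1$-form $\theta_Q$ under a fibrewise-linear automorphism of $T^*Q$ built out of $\hat R$, and then invoke the elementary fact that the pullback of a symplectic form by a diffeomorphism is again symplectic.

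First I would recall the local picture. Writing $\hat R = R^i_j\, \partial/\partial q^i \otimes \dd q^j$, non-degeneracy of $\hat R$ means precisely that the matrix $\bigl(R^i_j(q)\bigr)$ is invertible at every point of $Q$, and $\tilde R = \hat R(\theta_Q) = R^i_j\, p_i\, \dd q^j$. Next, let $\Phi\colon T^*Q \to T^*Q$ be the vector-bundle morphism over $\mathrm{id}_Q$ given fibrewise by the transpose of $\hat R$, i.e. $\langle \Phi(\xi_q), v_q\rangle = \langle \xi_q, \hat R(v_q)\rangle$; in bundle coordinates $\Phi(q^i, p_i) = (q^i,\, R^j_i\, p_j)$, the sum being over $j$. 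Since $\bigl(R^i_j\bigr)$ is invertible and depends smoothly on $q$, $\Phi$ is a diffeomorphism, its inverse being given fibrewise by the inverse matrix; this is the only place where the hypothesis on $\hat R$ is used. Because $\Phi$ covers the identity on $Q$, a one-line computation gives $\Phi^*\theta_Q = \Phi^*(p_k\,\dd q^k) = (R^j_k p_j)\,\dd q^k = \tilde R$.

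Finally, since exterior differentiation commutes with pullback, $\dd\tilde R = \dd(\Phi^*\theta_Q) = \Phi^*(\dd\theta_Q) = -\,\Phi^*\omega_Q$. As $\omega_Q$ is symplectic and $\Phi$ is a diffeomorphism, $\Phi^*\omega_Q$ is closed and non-degenerate, hence symplectic; therefore $\dd\tilde R$ is symplectic as well.

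I do not expect a serious obstacle here; the only delicate points are checking that $\Phi$ is a genuine (global) diffeomorphism — which is exactly what non-degeneracy of $\hat R$ provides — and keeping the index placement straight so that the identity $\Phi^*\theta_Q = \tilde R$ holds on the nose. One can alternatively avoid $\Phi$ and argue directly in coordinates: $\dd\tilde R = \eta + R^i_j\, \dd p_i\wedge\dd q^j$ with $\eta = (\partial R^i_j/\partial q^k)\,p_i\,\dd q^k\wedge\dd q^j$ a semibasic $2$-form, and a block-determinant computation shows that the matrix of any $2$-form of this shape has determinant $\pm\bigl(\det(R^i_j)\bigr)^2$, independently of $\eta$; but the pullback argument is cleaner and makes transparent why the horizontal part $\eta$ is irrelevant.
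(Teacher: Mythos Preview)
Your proof is correct and takes a genuinely different route from the paper. The paper argues directly in coordinates: it expands $\dd\tilde R$ as
\[
\dd \tilde{R} = \frac{\partial R^k_{\ j}}{\partial q^i}\, p_k\, \dd q^i \wedge \dd q^j - R^j_{\ i}\, \dd q^i \wedge \dd p_j,
\]
contracts against a generic vector field $X = X^i\,\partial/\partial q^i + Y_i\,\partial/\partial p_i$, and reads off from the resulting linear system that $X=0$ whenever $(R^i_{\ j})$ is invertible. This is precisely the block-determinant computation you sketch at the end of your proposal as the alternative.

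Your main argument, by contrast, factors $\tilde R$ through the tautological form: constructing the fibrewise-linear automorphism $\Phi$ with $\Phi^*\theta_Q=\tilde R$ reduces the statement to the invariance of symplecticity under diffeomorphisms. This is cleaner, makes transparent why the semibasic term $\eta$ coming from $\partial R^i_{\ j}/\partial q^k$ plays no role, and in fact strengthens the remark the paper makes immediately after the lemma: the paper observes that when $R^i_{\ j}$ is constant in $q$ one obtains Darboux coordinates for $\dd\tilde R$ by a linear change of momenta, whereas your $\Phi$ furnishes such coordinates globally even when $R$ varies over $Q$. The paper's computation, on the other hand, is self-contained and avoids introducing the auxiliary map; a reader need not verify that $\Phi$ is globally well-defined or that the index gymnastics in $\Phi^*\theta_Q=\tilde R$ come out right.
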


\begin{proof}
We have that
\begin{equation}
  \dd \tilde{R} = \frac{\partial R^k_{\ j}} {\partial q^i} p_k \dd q^i \wedge \dd q^j
                - R^j_{\ i} \dd q^i \wedge \dd p_j,
\end{equation}
so
\begin{equation}
  \contr{X} \dd \tilde R = \left[ \left( \frac{\partial R^k_{\ j}} {\partial q^i  } -\frac{\partial R^k_{\ i}} {\partial q^j  }  \right) p_k X^i - R^i_j  Y_i  \right] \dd q^j
  - R^{j}_{\ i} X^i \dd p_j,
\end{equation}
for a vector field $X = X^i \partial/\partial q^i + Y_i \partial/\partial p_i$ on $T^*Q$. Then $X\in \ker \Omega_{Q,\gamma}$ if and only if
\begin{equation}
\begin{aligned}
  & \left( \frac{\partial R^k_{\ j}} {\partial q^i  } -\frac{\partial R^k_{\ i}} {\partial q^j  }  \right) p_k X^i - R^i_j  Y_i, 
  & R^j_{\ i} X^i = 0. 
\end{aligned}
\end{equation}
If $\hat{R}$ is non-degenerate (in particular, if $R$ is positive-definite), then $X\in \ker \dd \tilde{R}$ if and only if $X=0$, so $\dd \tilde{R}$ is symplectic.
\end{proof}

When $R^i_j $ does not depend on $(q^i)$, we can make the change of bundle coordinates
\begin{equation}
  p_j \mapsto \tilde{p}_j = -R^{i}_{\ j} p_i,
\end{equation}
so that $\dd \tilde{R} = \omega_Q$, and $(q^i, \tilde{p}_i)$ are Darboux coordinates. 

\begin{proposition}
Consider a linear Rayleigh system $(H,\hat{R})$. Suppose that $\hat{R}$ is non-degenerate. Then, a closed 1-form $\gamma$ on $Q$ is a solution of the Hamilton-Jacobi problem for $(H,\tilde{R})$ if and only if $\Im \gamma$ is a Lagrangian submanifold of $(T^*Q, \dd \tilde R)$.
\end{proposition}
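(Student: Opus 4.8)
The plan is to recast both sides of the asserted equivalence as a single condition on the $1$-form $\gamma^{*}\tilde R$ on $Q$. First I would invoke the preceding Lemma: non-degeneracy of $\hat R$ makes $\dd\tilde R$ a symplectic form on $T^{*}Q$. Since $\Im\gamma$ is the image of a section of $\pi_{Q}$, it is an embedded submanifold with $\dim\Im\gamma=\dim Q=\tfrac12\dim T^{*}Q$; hence $\Im\gamma$ is a Lagrangian submanifold of $(T^{*}Q,\dd\tilde R)$ exactly when it is isotropic, i.e.\ when $\gamma^{*}(\dd\tilde R)=0$, and since $\gamma$ is a diffeomorphism onto $\Im\gamma$ and pullback commutes with $\dd$, this is equivalent to $\dd(\gamma^{*}\tilde R)=0$: the $1$-form $\gamma^{*}\tilde R$ is closed.

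For the implication ``$\gamma$ a solution $\Rightarrow$ $\Im\gamma$ is $\dd\tilde R$-Lagrangian'' I would then read off the Hamilton--Jacobi equation \eqref{HJ_eq} for $(H,\tilde R)$, namely $\dd(H\circ\gamma)=-\gamma^{*}\tilde R$: if it holds then $\gamma^{*}\tilde R=-\dd(H\circ\gamma)$ is exact, a fortiori closed, and the first paragraph applies. This is the quantitative form of the remark after Theorem~\ref{theorem_HJ}; equivalently one may use condition (iii) of that theorem, since $X_{H,\tilde R}$ is tangent to the $\omega_{Q}$-Lagrangian submanifold $\Im\gamma$, which forces $(\dd H+\tilde R)|_{\Im\gamma}=0$ and hence $\dd\tilde R|_{\Im\gamma}=0$.

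The converse --- that $\Im\gamma$ being $\dd\tilde R$-Lagrangian forces $\gamma$ to solve the Hamilton--Jacobi problem --- is the step I expect to be the real obstacle and would treat with the most care. The direct argument only yields that $\gamma^{*}\tilde R$ is \emph{closed}, whereas the Hamilton--Jacobi equation demands the sharper $\gamma^{*}\tilde R=-\dd(H\circ\gamma)$, and closedness of a $1$-form carries strictly less information than this equality, so the implication cannot come from symplectic bookkeeping alone. The route I would try is to pass to the bundle-coordinate change $p_{j}\mapsto\tilde p_{j}=-R^{i}_{\,j}p_{i}$ used in the text --- at least when $R^{i}_{\,j}$ is independent of $q$, where $\dd\tilde R$ becomes the canonical form and $\tilde R$ becomes $-\tilde p_{i}\,\dd q^{i}$ --- and check whether the Hamilton--Jacobi condition collapses there; failing that, I would expect the biconditional to need an extra hypothesis, such as tangency of $X_{H,\tilde R}$ to $\Im\gamma$ (as in Theorem~\ref{theorem_HJ}(iii)), and I would want to verify whether the statement as written actually requires such a restriction before committing to it.
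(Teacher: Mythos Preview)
Your forward implication is correct and matches what the paper sketches in the remark preceding the proposition: if $\gamma$ solves the Hamilton--Jacobi problem then $\gamma^{*}\tilde R=-\dd(H\circ\gamma)$ is exact, so $\gamma^{*}(\dd\tilde R)=\dd(\gamma^{*}\tilde R)=0$ and $\Im\gamma$, having half the dimension of $T^{*}Q$, is Lagrangian for $\dd\tilde R$. Note, however, that the paper itself gives \emph{no proof} of this proposition; it is stated inside a remark and left unargued.

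Your suspicion about the converse is well founded, and in fact the converse as stated is \emph{false} without an additional hypothesis. The condition ``$\Im\gamma$ is Lagrangian for $\dd\tilde R$'' only says $\gamma^{*}\tilde R$ is closed, and this does not force the specific equality $\gamma^{*}\tilde R=-\dd(H\circ\gamma)$. A concrete counterexample: take $Q=\RR$, $H=\tfrac12 p^{2}$, and $\hat R=\partial_{q}\otimes\dd q$, so $\tilde R=p\,\dd q$ and $\dd\tilde R=\dd p\wedge\dd q$ is the canonical form. Any closed $1$-form $\gamma=\gamma(q)\,\dd q$ has $\Im\gamma$ one-dimensional in a two-dimensional symplectic manifold, hence automatically Lagrangian for $\dd\tilde R$. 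But the Hamilton--Jacobi equation reads $\gamma\gamma'+\gamma=0$, so the constant section $\gamma=\dd q$ (i.e.\ $\gamma(q)\equiv 1$) is \emph{not} a solution, yet its image is Lagrangian. The coordinate change $p\mapsto\tilde p=-R^{i}_{\;j}p_{i}$ you propose cannot rescue this: it turns $\dd\tilde R$ into $\omega_{Q}$, whereupon ``Lagrangian for $\dd\tilde R$'' becomes ``$\gamma$ closed in the new coordinates'', which is again strictly weaker than the Hamilton--Jacobi equation. Your instinct that an extra hypothesis (e.g.\ tangency of $X_{H,\tilde R}$ to $\Im\gamma$, as in Theorem~\ref{theorem_HJ}(iii)) is needed for the biconditional is correct; as written, only the forward implication holds in general.
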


For a Rayleigh system $(H, \tilde{\mathcal{R}} )$, the Hamilton-Jacobi equation can also be written as
\begin{equation}
  \gamma^* \left(\dd H + \ddst \tilde {\mathcal{R}}  \right) = 0.
  \label{HJ_Rayleigh}
\end{equation}
We will also refer to the Hamilton-Jacobi problem for $(H, \bar{R})$ as the Hamilton-Jacobi problem for $(H, \mathcal{R})$.
In the case of a linear Rayleigh system $(H, \hat{R})$, we have
\begin{equation}
  \gamma^* \tilde{R} = R^i_j  \gamma_j \dd q^j = \hat{R}^* (\gamma), 
\end{equation}
so the Hamilton-Jacobi equation \eqref{HJ_Rayleigh} can be written as
\begin{equation}
  \dd (H\circ \gamma) + \hat{R}^* (\gamma) = 0. 
  \label{HJ_Rayleigh_linear}
\end{equation}
We will also refer to the Hamilton-Jacobi problem for $(H, \bar{R})$ as the Hamilton-Jacobi problem for $(H, \hat{R})$.
\end{remark}

One can consider a more general problem by relaxing the hypothesis of $\gamma$ being closed.

\begin{definition}
A \emph{weak solution of the Hamilton-Jacobi problem} for $(H,\beta)$ is a 1-form $\gamma$ on $Q$ such that $X_{H,\beta}$ and $X_{H,\beta}^\gamma$ are $\gamma$-related. Here $ X_{H,\beta}^\gamma$ is the vector field defined by \eqref{projected_vf}.
\end{definition}

\begin{proposition}
Consider a 1-form $\gamma$ on $Q$. Then the following statements are equivalent:

\begin{enumerate}
\item $\gamma$ is a weak solution of the Hamilton-Jacobi problem for $(H,\beta)$,
\item $\gamma$ satisfies the equation
\begin{equation}
     \contr{ X_{H,\beta}^\gamma} \dd \gamma = -\dd (H\circ \gamma) - \gamma^* \beta,
     \label{eq_theorem_generalized_H-J}
\end{equation}
\item $X_{H,\beta}$ is tangent to the submanifold $\Im \gamma\subset T^*Q$,
\item if $\sigma:\RR\to Q$ satisfies 
\begin{equation}
  \dot{\sigma}(t)=T \pi_{Q}\circ X_{H, \beta}\circ \gamma\circ \sigma(t), \label{weak_H-J_eq}
\end{equation}
then $\gamma\circ \sigma$ is an integral curve of $X_{H,\beta}$.
\end{enumerate}
\end{proposition}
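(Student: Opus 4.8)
The plan is to prove the proposition by establishing the three equivalences $(1)\Leftrightarrow(3)$, $(1)\Leftrightarrow(2)$ and $(1)\Leftrightarrow(4)$, and then invoking transitivity; each of these is the ``weak'' counterpart of the corresponding clause of Theorem~\ref{theorem_HJ}, and the point of the argument is precisely that closedness of $\gamma$ was never used in the discussion preceding that theorem — its only effect was to discard the term $\contr{X_{H,\beta}^\gamma}\dd\gamma$, which must now be kept. The equivalence $(1)\Leftrightarrow(3)$ is then immediate from the first Lemma of this section: by definition $\gamma$ is a weak solution exactly when $X_{H,\beta}$ and $X_{H,\beta}^\gamma$ are $\gamma$-related, and that Lemma identifies this with tangency of $X_{H,\beta}$ to $\Im\gamma=\gamma(Q)$.

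For $(1)\Leftrightarrow(2)$ I would give an intrinsic argument. Since $\gamma$ is a section of $\pi_Q$, one has $\gamma^*\theta_Q=\gamma$ and hence $\gamma^*\omega_Q=-\dd\gamma$. Because $T\pi_Q\circ X_{H,\beta}\circ\gamma=X_{H,\beta}^\gamma$, we may decompose $X_{H,\beta}\circ\gamma=T\gamma(X_{H,\beta}^\gamma)+V$ with $V$ a $\pi_Q$-vertical vector field along $\Im\gamma$; tangency of $X_{H,\beta}$ to $\Im\gamma$ is exactly $V=0$. Pulling $\contr{X_{H,\beta}}\omega_Q=\dd H+\beta$ back along $\gamma$ and using the decomposition gives
\[
  \dd(H\circ\gamma)+\gamma^*\beta+\contr{X_{H,\beta}^\gamma}\dd\gamma=\mu,
\]
where $\mu$ is the $1$-form on $Q$ with $\mu_q(v)=\omega_Q\big(V(q),T_q\gamma(v)\big)$. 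Thus equation~\eqref{eq_theorem_generalized_H-J} holds iff $\mu=0$. Finally $\mu=0$ forces $V=0$: the fibres of $\pi_Q$ are Lagrangian for $\omega_Q$, so $\omega_Q$ already vanishes on vertical–vertical pairs, and since $T_q\gamma(T_qQ)$ is complementary to the vertical subspace at $\gamma(q)$, the vanishing of $\mu$ gives $\omega_Q(V,\cdot)=0$ on all of $T_{\gamma(q)}T^*Q$, whence $V=0$ by non-degeneracy. This proves $(2)\Leftrightarrow(3)$, and combined with the previous paragraph, $(1)\Leftrightarrow(2)$. (Equivalently, one may simply re-read the coordinate computation leading to~\eqref{HJ_eq}, stopping one line earlier, before $\dd\gamma$ is dropped.)

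For $(1)\Leftrightarrow(4)$: if $\gamma$ is a weak solution and $\sigma$ solves~\eqref{weak_H-J_eq}, then $\dot\sigma=X_{H,\beta}^\gamma\circ\sigma$, and $\gamma$-relatedness gives $\dot{\overline{\gamma\circ\sigma}}=T\gamma\circ\dot\sigma=T\gamma\circ X_{H,\beta}^\gamma\circ\sigma=X_{H,\beta}\circ\gamma\circ\sigma$, so $\gamma\circ\sigma$ is an integral curve of $X_{H,\beta}$ — this is the computation already displayed after the first Lemma, which did not use closedness. For the converse, through each $q\in Q$ take an integral curve $\sigma$ of $X_{H,\beta}^\gamma$ with $\sigma(0)=q$; by hypothesis $\gamma\circ\sigma$ is an integral curve of $X_{H,\beta}$, and evaluating $\dot{\overline{\gamma\circ\sigma}}=X_{H,\beta}\circ\gamma\circ\sigma$ at $t=0$ gives $T_q\gamma\big(X_{H,\beta}^\gamma(q)\big)=X_{H,\beta}(\gamma(q))$; as $q$ is arbitrary this is the $\gamma$-relatedness~\eqref{def_gamma_related}, i.e.\ (1).

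\textbf{Main obstacle.} There is no serious difficulty: the statement is a repackaging of the material preceding Theorem~\ref{theorem_HJ}, now phrased for an arbitrary (not necessarily closed) $1$-form. The two places that require a little care are the converse of $(1)\Leftrightarrow(4)$, where one must invoke local existence of integral curves of $X_{H,\beta}^\gamma$ through every point in order to pass from the curve condition to the pointwise identity of vector fields, and the step $(2)\Rightarrow(3)$ in $(1)\Leftrightarrow(2)$, where one must retain the term $\contr{X_{H,\beta}^\gamma}\dd\gamma$ and use that the fibres of $\pi_Q$ are $\omega_Q$-Lagrangian — it is this fact that upgrades ``$\mu=0$'' to genuine tangency rather than some weaker condition.
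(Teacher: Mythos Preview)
Your proposal is correct and aligns with the paper, which does not give a separate proof of this proposition but relies on the discussion preceding Theorem~\ref{theorem_HJ}: the Lemma for $(1)\Leftrightarrow(3)$, the integral-curve paragraph for $(1)\Leftrightarrow(4)$, and the local computation culminating in the displayed identity $\dd(H\circ\gamma)+\gamma^*\beta+\contr{X_{H,\beta}^\gamma}\dd\gamma=0$ for $(1)\Leftrightarrow(2)$. Your treatment of $(1)\Leftrightarrow(2)$ goes a bit further than the paper: where the paper works entirely in bundle coordinates, you give an intrinsic argument via the decomposition $X_{H,\beta}\circ\gamma=T\gamma(X_{H,\beta}^\gamma)+V$ and the identity $\gamma^*\omega_Q=-\dd\gamma$, then use that the $\pi_Q$-fibres are Lagrangian to deduce $V=0$ from $\mu=0$; this is a cleaner justification of the converse direction $(2)\Rightarrow(1)$, which in the paper is only implicit in the reversibility of the coordinate manipulation.
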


Let
\begin{equation}
  \eta = \liedv{X_{H,\beta}^\gamma} \gamma + \gamma^*\beta.
\end{equation}
Observe that Eq.~\eqref{weak_H-J_eq} holds if and only if
\begin{equation}
  \eta + \dd \left(H\circ \gamma - \gamma({X_{H,\beta}^\gamma})  \right) = 0.
\end{equation}

\begin{remark}[Local expressions]
Let $\gamma$ be a 1-form $\gamma$ on $Q$. Let $H$ be a Hamiltonian function on $T^*Q$, let $\beta, \tilde{\mathcal R}$ and $\hat R$ be an external force, a Rayleigh potential and a Rayleigh tensor on $T^*Q$, respectively. If $\gamma$ is closed, then
\begin{enumerate}
\item $\gamma$ is a solution of the Hamilton-Jacobi problem for $(H, \beta)$ if and only if
\begin{equation}
  \frac{\partial H} {\partial q^i} + \frac{\partial H} {\partial p_j} \frac{\partial \gamma_j} {\partial q^i} + \beta_i \circ \gamma = 0,\qquad i=1,\ldots, n,
  \label{HJ_local}
\end{equation}
\item $\gamma$ is a solution of the Hamilton-Jacobi problem for $(H, \tilde{\mathcal R})$ if and only if
\begin{equation}
  \frac{\partial H} {\partial q^i} + \frac{\partial H} {\partial p_j} \frac{\partial \gamma_j} {\partial q^i} + \frac{\partial \tilde{\mathcal R}} {\partial p_i} \circ \gamma = 0,\qquad i=1,\ldots, n,
  \label{HJ_local_Rayleigh}
\end{equation}
\item $\gamma$ is a solution of the Hamilton-Jacobi problem for $(H, \hat R)$ if and only if
\begin{equation}
  \frac{\partial H} {\partial q^i} + \frac{\partial H} {\partial p_j} \frac{\partial \gamma_j} {\partial q^i} + R^i_j \gamma_j = 0,\qquad i=1,\ldots, n,
  \label{HJ_local_Rayleigh_linear}
\end{equation}

\end{enumerate}
Moreover, if $\gamma$ is not necessarily closed, it is a weak solution of the Hamilton-Jacobi problem for $(H, \beta)$ if and only if
\begin{equation}
   \frac{\partial H} {\partial q^i} + \frac{\partial H} {\partial p_j} \frac{\partial \gamma_j} {\partial q^i}
   + \beta_i \circ \gamma 
   + \frac{\partial H} {\partial p_j} \left(\frac{\partial \gamma_i} {\partial q^j} 
   - \frac{\partial \gamma_j} {\partial q^i}  \right) = 0.
   \label{HJ_local_weak}
\end{equation}
Similar expressions can be easily found for Rayleigh forces or linear Rayleigh forces.
\end{remark}

\subsection{Complete solutions}
The main interest in the standard Hamilton-Jacobi theory lies in finding a complete family of solutions to the problem \cite{carinena_geometric_2006,de_leon_hamilton-jacobi_2014}. As it is explained below, knowing a complete solution of the Hamilton-Jacobi problem for a forced Hamiltonian system is tantamount to completely integrating the system, namely, there is a constant of the motion for each degree of freedom of the system, and these constants of the motion are in mutual involution.

Consider a forced Hamiltonian system $(H,\beta)$ on $T^*Q$, and assume that $\dim Q=n$.

\begin{definition}\label{def_complete_solutions}
Let $U\subseteq \RR^n$ be an open set. A map $\Phi:Q\times U\to T^*Q$ is called a \emph{complete solution of the Hamilton-Jacobi problem} for $(H,\beta)$ if
\begin{enumerate}
\item $\Phi$ is a local diffeomorphism,
\item for any $\lambda=(\lambda_1,\ldots,\lambda_n)\in U$, the map
\begin{equation}
\begin{aligned}
  \Phi_\lambda: Q&\to T^*Q\\
  q&\mapsto \Phi_\lambda(q)=\Phi(q,\lambda_1,\ldots,\lambda_n)
\end{aligned}
\end{equation}
is a solution of the Hamilton-Jacobi problem for $(H,\beta)$.
\end{enumerate}
\end{definition}

For the sake of simplicity, we shall assume $\Phi$ to be a global diffeomorphism. Consider the functions given by
\begin{equation}
  f_a=\pi_a\circ \Phi^{-1}:T^*Q\to \RR,
\end{equation}
where $\pi_a$ denotes the projection over the $a$-th component of $\RR^n$.

\begin{proposition}
  The functions $f_a$ are constants of the motion. Moreover, they are in involution, i.e., $\left\{f_a,f_b  \right\}=0$, where $\left\{\cdot,\cdot  \right\}$ is the Poisson bracket defined by $\omega_Q$.
\end{proposition}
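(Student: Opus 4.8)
The plan is to identify the common level sets of $(f_1,\ldots,f_n)$ with the Lagrangian submanifolds delivered by the complete solution, and then to extract both claims from the geometry of those level sets. The argument has three steps: an identification step, the constants-of-the-motion claim, and the involution claim.

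First I would unravel the definition of the $f_a$. For $\alpha_q\in T^*Q$ one has $f_a(\alpha_q)=\lambda_a$ for all $a$ precisely when $\Phi^{-1}(\alpha_q)=(q,\lambda)$, i.e.\ when $\alpha_q=\Phi_\lambda(q)$, so the common level set is
\begin{equation}
  \{\alpha\in T^*Q : f_a(\alpha)=\lambda_a,\ a=1,\ldots,n\}=\Im\Phi_\lambda .
\end{equation}
Since $\Phi$ is a (global) diffeomorphism, the map $\alpha\mapsto(f_1(\alpha),\ldots,f_n(\alpha))$ equals $\pi\circ\Phi^{-1}$ with $\pi$ the projection $Q\times U\to U$, hence is a submersion onto $U$; therefore $\dd f_1,\ldots,\dd f_n$ are pointwise linearly independent and the submanifolds $\Im\Phi_\lambda$, $\lambda\in U$, foliate $T^*Q$.

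Next I would prove that the $f_a$ are constants of the motion. Because $\Phi_\lambda$ is a solution of the Hamilton-Jacobi problem for $(H,\beta)$, Theorem~\ref{theorem_HJ} ensures that $\Im\Phi_\lambda$ is a Lagrangian submanifold of $(T^*Q,\omega_Q)$ and that $X_{H,\beta}$ is tangent to it. Tangency says that at each $\alpha\in\Im\Phi_\lambda$ the vector $X_{H,\beta}(\alpha)$ lies in $T_\alpha(\Im\Phi_\lambda)=\bigcap_a\ker(\dd f_a)_\alpha$, i.e.\ $X_{H,\beta}(f_a)=0$ along $\Im\Phi_\lambda$; since these level sets cover $T^*Q$, I conclude $X_{H,\beta}(f_a)\equiv 0$, so each $f_a$ is a constant of the motion.

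Finally I would treat involution, a step that no longer involves $H$ or $\beta$ and is the classical argument for a Lagrangian foliation defined by functions. Fixing $\lambda$ and writing $N=\Im\Phi_\lambda$, from $\contr{X_{f_b}}\omega_Q=\dd f_b$ I get, for $\alpha\in N$, that $T_\alpha N=\bigcap_b\ker(\dd f_b)_\alpha$ is exactly the $\omega_Q$-orthogonal complement of $\operatorname{span}\{(X_{f_1})_\alpha,\ldots,(X_{f_n})_\alpha\}$; since $N$ is Lagrangian it equals its own $\omega_Q$-orthogonal complement, so taking complements gives $T_\alpha N=\operatorname{span}\{(X_{f_1})_\alpha,\ldots,(X_{f_n})_\alpha\}$. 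Hence each $X_{f_b}$ is tangent to $N$, and therefore $\{f_a,f_b\}=\omega_Q(X_{f_a},X_{f_b})=X_{f_b}(f_a)$ vanishes on $N$; as the $N$ foliate $T^*Q$, $\{f_a,f_b\}\equiv 0$. I expect the only substantive point to be this last symplectic-linear-algebra step, which converts the Lagrangian condition into tangency of the $X_{f_b}$ to $N$; the rest is bookkeeping, contingent on the standing assumption that $\Phi$ is a global diffeomorphism (otherwise one argues on the open set $\Im\Phi$).
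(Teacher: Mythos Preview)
Your proof is correct and follows essentially the same route as the paper: identify the common level sets of the $f_a$ with $\Im\Phi_\lambda$, invoke Theorem~\ref{theorem_HJ} for tangency of $X_{H,\beta}$ to obtain the constants of the motion, and use the Lagrangian property of $\Im\Phi_\lambda$ for involution. The paper's treatment of the involution step is terser---it simply writes $\{f_a,f_b\}|_{\Im\Phi_\lambda}=\omega_Q(X_{f_a},X_{f_b})|_{\Im\Phi_\lambda}=0$ without spelling out why the $X_{f_b}$ are tangent to the leaf---so your explicit symplectic-linear-algebra argument showing $T_\alpha N=\operatorname{span}\{(X_{f_1})_\alpha,\ldots,(X_{f_n})_\alpha\}$ actually fills in a detail the paper leaves implicit.
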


\begin{proof}
Given $p\in T^*Q$, suppose that $f_a(p)=\lambda_a$ for each $a=1,\ldots,n$. Observe that 
\begin{equation}
  \Im(\Phi_\lambda)= \bigcap_{a=1}^n f_a^{-1}(\lambda_a),
\end{equation}
or, in other words,
\begin{equation}
  \Im(\Phi_\lambda) = \left\{ x\in T^*Q \mid f_a(x)=\lambda_a, a=1,\ldots,n  \right\}.
\end{equation}
By Theorem \ref{theorem_HJ}, $X_{H,\beta}$ is tangent to $\Im \Phi_\lambda$, and hence 
  \begin{equation}
    X_{H,\beta}(f_a)=0
  \end{equation}
for every $a=1,\ldots,n$, that is, $f_1,\ldots,f_n$ are constants of the motion.
In addition,
\begin{equation}
  \left\{f_a,f_b  \right\}|_{\Im (\Phi_\lambda)}
  =\omega_Q(X_{f_a}, X_{f_b})|_{\Im (\Phi_\lambda)} = 0.
\end{equation}
\end{proof}

\begin{example}\label{example_Hamiltonian_drag}
Consider a forced Hamiltonian system $(H,\beta)$, with
\begin{equation}
  H = \frac{1}{2} \sum_{i=1}^n p_i^2,\qquad
  \beta = \sum_{i=1}^n \kappa_i p_i^2 \dd q_i.
\end{equation}
Consider the functions
\begin{equation}
  f_a =  e^{\kappa_aq^a} p_a,\qquad a=1,\ldots,n.
\end{equation}
The dynamics of the system is given by
\begin{equation}
  X_{H,\beta} = p_i \frac{\partial  } {\partial q^i} - \kappa_i p_i^2 \frac{\partial  } {\partial p_i},
\end{equation}
so that
\begin{equation}
  X_{H,\beta} (f_a) = 0,\qquad a=1,\ldots, n,
\end{equation}
and, thus, the functions are constants of the motion. 
Their Hamiltonian vector fields are given by
\begin{equation}
  X_{f_a} = e^{\kappa_a q^a} \left( \frac{\partial  } {\partial q^a} -\kappa p_a \frac{\partial  } {\partial p_a}  \right).
\end{equation}
Clearly,
\begin{equation}
  \left\{f_a,f_b  \right\} = \dd f_a(X_{f_b}) = 0 
\end{equation}
for every $a,b=1,\ldots,n$, so the constants of the motion are in involution. Consider the 1-form $\gamma$ on $Q$ given by
\begin{equation}
  \gamma = \sum_{i=1}^n \lambda_i e^{-\kappa_i q^i} \dd q^i.
\end{equation}
Clearly, $\gamma$ is closed. In fact, it is exact:
\begin{equation}
  \gamma = \dd g,\qquad 
  g = - \sum_{i=1}^n \frac{\lambda_i}{\kappa_i}  e^{-\kappa_i q^i}.
\end{equation}
Moreover,
\begin{equation}
  H\circ \gamma = \frac{1}{2} \sum_{i=1}^n \lambda_i^2 e^{-2\kappa_i q^i},
\end{equation}
and
\begin{equation}
  \gamma^*\beta = \sum_{i=1}^n \kappa_i \lambda_i^2 e^{-2\kappa_i q^i}
                = - \dd (H\circ \gamma).
\end{equation}
Hence $\gamma$ is a complete solution of the Hamilton-Jacobi problem.
When $\kappa_i=0$ for every $i=1,\ldots,n$, 
\begin{equation}
     \gamma  = \sum_{i=1}^n \lambda_i \dd q^i,
\end{equation} 
which is a complete solution for the (conservative) Hamilton-Jacobi problem for $H$.
See Example \ref{Example_H-J_Lagrangian_fluid_n_dimensions} for the Lagrangian counterpart of this example.
\end{example}


\begin{example}[Free particle with a homogeneous linear Rayleigh force] \label{example_Hamiltonian_homogeneous_Rayleigh}
Consider a linear Rayleigh system $(H,\hat{R})$, with
\begin{equation}
  H = \frac{1}{2} g^{ij} p_i p_j,   
\end{equation} 
where $g$ does not depend on $q$ (for instance, $g_{ij}=m_i \delta_{ij}$, with $m_i$ the mass of the $i$-th particle), and 
\begin{equation}
  \hat{R} = R^i_{j} \frac{\partial  } {\partial q^i} \otimes \dd q^j
\end{equation}
where $R^i_j$ does not depend on $q$. Then the Hamilton-Jacobi equation \eqref{HJ_Rayleigh_linear} can be locally written as
\begin{equation}
  \gamma_k \left(g^{jk} \frac{\partial \gamma_j} {\partial q^i} + R_i^k  \right) = 0,
\end{equation}
so a complete solution is given by
\begin{equation}
  \gamma_\lambda = \left(\lambda_i - R_{ij} q^j  \right) \dd q^i,
\end{equation}
where $R_{ij} = g_{jk} R_i^k$. Clearly $\gamma_\lambda$ is closed, in fact, it is exact:
\begin{equation}
  \gamma_\lambda = \dd S_\lambda,\qquad
  S_\lambda = \lambda_i q^i - R_{ij} q^i q^j.
\end{equation}

\end{example}

\section{Hamilton-Jacobi theory for Lagrangian systems with external forces} \label{section_HJ_Lagrangian}
As it has been seen in the previous sections, the natural framework for the Hamilton-Jacobi theory is the Hamiltonian formalism on the cotangent bundle. 
Following Cariñena, Gràcia, Marmo, Martínez, Muñoz-Lecanda and Román-Roy \cite{marmo_hamilton-jacobi_2009,carinena_hamilton-jacobi_2009,carinena_geometric_2006,carinena_geometric_2010,carinena_structural_2016}, we introduce an analogous problem in the Lagrangian formalism on the tangent bundle as follows.

\begin{definition}
A vector field $X$ on $Q$ is called a \emph{solution of the Lagrangian Hamilton-Jacobi problem} for $(L,\alpha)$ if:
\begin{enumerate}
\item $\Leg \circ X$ is a closed 1-form,
\item $X$ satisfies the equation 
\begin{equation}
    \dd(E_L\circ X)=-X^* \alpha. \label{Lagrangian_H-J_equation}
\end{equation}
\end{enumerate}
This equation is known as the \emph{Lagrangian Hamilton-Jacobi equation}. When there is no risk of ambiguity, we shall refer to the Lagrangian Hamilton-Jacobi problem (resp. equation) as simply the Hamilton-Jacobi problem (resp. equation).
\end{definition}

If $\gamma=\Leg \circ X$ is a closed 1-form, then $\Im \gamma$ is a Lagrangian submanifold of $(T^*Q,\omega_Q)$. Therefore $\Im X$ is a Lagrangian submanifold of $(TQ,\omega_L)$. In other words, $\Leg \circ X$ is closed if and only if $X^*\omega_L=0$. Moreover, it is easy to see that $X$ and $\xi_{L,\alpha}$ are $X$-related, that is,
\begin{equation}
  \xi_{L,\alpha}\circ X = TX \circ X. \label{equation_proposition_X_related}
\end{equation} 
Analogously to the Hamiltonian case, one can show the following result (see also Refs.~\cite{iglesias-ponte_towards_2008,carinena_geometric_2006}).

\begin{proposition}
Let $X$ be a vector field on $Q$ that satisfies $X^*\omega_L=0$. Then the following assertions are equivalent:
\begin{enumerate}
\item X is a solution of the Hamilton-Jacobi problem for $(L,\alpha)$,
\item $\Im X$ is a Lagrangian submanifold of $TQ$ invariant by $\xi_{L,\alpha}$,
\item for every curve $\sigma:\RR\to Q$ such that $\sigma$ is an integral curve of $X$, then  $X \circ \sigma: \RR\to TQ$ is an integral curve of $\xi_{L,\alpha}$.
\end{enumerate}
\end{proposition}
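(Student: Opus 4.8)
The plan is to reduce the statement to the already-proved Hamiltonian Theorem~\ref{theorem_HJ} by transporting everything to $T^*Q$ along the Legendre transformation $\Leg$, which is a diffeomorphism because $L$ is hyperregular and a symplectomorphism $(TQ,\omega_L)\to(T^*Q,\omega_Q)$ since $\Leg^*\theta_Q=\theta_L$.

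First I would build the dictionary between the two pictures. Set $H=E_L\circ\Leg^{-1}$ and $\beta=(\Leg^{-1})^*\alpha$; as $\Leg$ is fibred over $Q$, $\beta$ is again a semibasic 1-form, so $(H,\beta)$ is a bona fide forced Hamiltonian system. Pushing the defining equation \eqref{dynamics_Lagrangian} forward by $\Leg$ and using $\Leg_*\omega_L=\omega_Q$ gives $\contr{\Leg_*\xi_{L,\alpha}}\omega_Q=\dd H+\beta$, hence $X_{H,\beta}=\Leg_*\xi_{L,\alpha}$; in particular $\Leg$ and $\Leg^{-1}$ interchange integral curves of $\xi_{L,\alpha}$ and of $X_{H,\beta}$. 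Writing $\gamma=\Leg\circ X$, from $\pi_Q\circ\Leg=\tau_Q$ we get that $\gamma$ is a section of $\pi_Q$, i.e.\ a 1-form on $Q$, and $\gamma^*\omega_Q=X^*\Leg^*\omega_Q=X^*\omega_L$, so $\gamma$ is closed iff $X^*\omega_L=0$. Finally $E_L\circ X=H\circ\gamma$ and $X^*\alpha=\gamma^*\beta$, so the Lagrangian Hamilton-Jacobi equation \eqref{Lagrangian_H-J_equation} for $X$ is exactly the Hamiltonian equation \eqref{HJ_eq} for $\gamma$; thus assertion 1 holds iff $\gamma$ is a solution of the Hamilton-Jacobi problem for $(H,\beta)$.

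The one point that needs care is identifying the projected vector field. Since $\xi_{L,\alpha}$ is a SODE we have $T\tau_Q\circ\xi_{L,\alpha}=\Id_{TQ}$; combining this with $T\pi_Q\circ T\Leg=T\tau_Q$, with $X_{H,\beta}=\Leg_*\xi_{L,\alpha}$, and with $\Leg^{-1}\circ\gamma=X$, we obtain
\begin{equation}
  X_{H,\beta}^\gamma=T\pi_Q\circ X_{H,\beta}\circ\gamma=T\tau_Q\circ\xi_{L,\alpha}\circ\Leg^{-1}\circ\gamma=X.
\end{equation}
Hence the curve condition $\dot\sigma(t)=T\pi_Q\circ X_{H,\beta}\circ\gamma\circ\sigma(t)$ appearing in Theorem~\ref{theorem_HJ} says precisely that $\sigma$ is an integral curve of $X$, and ``$\gamma\circ\sigma$ is an integral curve of $X_{H,\beta}$'' is equivalent, via the intertwining $X_{H,\beta}=\Leg_*\xi_{L,\alpha}$, to ``$X\circ\sigma=\Leg^{-1}\circ\gamma\circ\sigma$ is an integral curve of $\xi_{L,\alpha}$''; this is assertion 3. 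For assertion 2, note $\Im\gamma=\Leg(\Im X)$ and $\Leg$ is a symplectomorphism intertwining the two dynamical fields, so $\Im\gamma$ is Lagrangian in $T^*Q$ and $X_{H,\beta}$-invariant iff $\Im X$ is Lagrangian in $TQ$ and $\xi_{L,\alpha}$-invariant, which matches condition (iii) of Theorem~\ref{theorem_HJ}.

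With all pieces translated, the equivalence of 1, 2 and 3 follows at once by applying Theorem~\ref{theorem_HJ} to $\gamma=\Leg\circ X$ and $(H,\beta)$. The main (and essentially only) obstacle is the middle step: verifying that $\Leg$ intertwines $\xi_{L,\alpha}$ with $X_{H,\beta}$ and, crucially, that the SODE property pins $X_{H,\beta}^\gamma$ down to be exactly $X$; everything else is a routine transport of structure along a diffeomorphism. One could instead argue directly, repeating in the Lagrangian setting the local computation and tangency argument behind Theorem~\ref{theorem_HJ}, but the reduction via $\Leg$ is shorter.
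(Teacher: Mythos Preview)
Your argument is correct, but it follows a different route from the one the paper indicates. The paper does not actually write out a proof: it simply states that the result is obtained ``analogously to the Hamiltonian case,'' i.e.\ by repeating directly in $(TQ,\omega_L)$ the local computation and tangency argument that underlie Theorem~\ref{theorem_HJ}. You instead transport the whole problem to $T^*Q$ via $\Leg$ and then \emph{invoke} Theorem~\ref{theorem_HJ}. This is exactly the content of the later equivalence theorem in Section~4.1, which you have effectively re-derived (and slightly sharpened, via the SODE identity $T\tau_Q\circ\xi_{L,\alpha}=\Id_{TQ}$ that pins down $X_{H,\beta}^\gamma=X$). Your route is cleaner once the dictionary is set up and avoids duplicating calculations; the paper's intended direct route has the minor advantage of being self-contained and not depending on hyperregularity of $L$ beyond the standing assumption. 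Either way the logic is sound.
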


\begin{remark}
For a Rayleigh system $(L, \mathcal{R})$, the Hamilton-Jacobi equation \eqref{Lagrangian_H-J_equation} can be written as
\begin{equation}
  X^*(\dd E_L + \dds \mathcal{R}) = 0.
\end{equation}
If $\sigma$ is an integral curve of $X$, then $X\circ \sigma$ is an integral curve of $\xi_{L, \bar R}$, which satisfies the forced Euler-Lagrange equations \eqref{Euler_Lagrange_Rayleigh}.
\end{remark}

As in the Hamiltonian case, one can consider a more general problem by relaxing the hypothesis of $\Leg\circ X$ being closed.
\begin{definition}
A \emph{weak solution of the Hamilton-Jacobi problem} for $(L,\alpha)$ is a vector field $X$ on $Q$ such that $X$ and $\xi_{L,\alpha}$ are $X$-related.
\end{definition}

  Clearly, a weak solution of the Hamilton-Jacobi problem for $(L,\alpha)$ is a solution of the Hamilton-Jacobi problem for $(L,\alpha)$ if and only if $X^*\omega_L=0$.



 

\begin{proposition} Let $X$ be a vector field on $Q$. Then the following statements are equivalent:
\begin{enumerate}
\item X is a solution of the generalized Hamilton-Jacobi problem for $(L,\alpha)$,
\item $X$ satisfies the equation
\begin{equation}
  \contr{X}(X^*\omega_L)= \dd (E_L\circ X) + X^*\alpha, \label{equation_theorem_generalized_solution}
\end{equation}
\item the submanifold $\Im X\subset TQ$ is invariant by $\xi_{L,\alpha}$,
\item if $\sigma:\RR\to Q$ is an integral curve of $X$, then $X\circ \sigma$ is an integral curve of $\xi_{L,\alpha}$.
\end{enumerate}
\end{proposition}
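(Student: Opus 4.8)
The plan is to prove the four statements equivalent by establishing the cycle (i)~$\Leftrightarrow$~(iv), (i)~$\Leftrightarrow$~(iii), (i)~$\Leftrightarrow$~(ii), mimicking \emph{mutatis mutandis} the arguments already used in the Hamiltonian setting (the discussion preceding Theorem~\ref{theorem_HJ} and its weak-solution analogue). The one extra ingredient I will use throughout is that $\xi_{L,\alpha}$ is a SODE, equivalently $T\tau_Q\circ\xi_{L,\alpha}=\Id_{TQ}$, together with the fact that $X$, being a section of $\tau_Q\colon TQ\to Q$, has image transverse to the $\tau_Q$-vertical subbundle.

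For (i)~$\Leftrightarrow$~(iv): by definition, $X$ being a weak solution means $\xi_{L,\alpha}\circ X=TX\circ X$; if this holds and $\sigma$ is an integral curve of $X$, then $\dot{\overline{X\circ\sigma}}=TX\circ\dot\sigma=(TX\circ X)\circ\sigma=(\xi_{L,\alpha}\circ X)\circ\sigma$, so $X\circ\sigma$ is an integral curve of $\xi_{L,\alpha}$; conversely, choosing through each $q\in Q$ an integral curve $\sigma$ of $X$ with $\sigma(0)=q$ and evaluating at $t=0$ gives $TX(X(q))=\xi_{L,\alpha}(X(q))$. For (i)~$\Rightarrow$~(iii): $TX\circ X$ takes values in $T(\Im X)$, so $\xi_{L,\alpha}$ is tangent to $\Im X$. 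For (iii)~$\Rightarrow$~(i): if $\xi_{L,\alpha}(X(q))\in T_{X(q)}(\Im X)=\Im(T_qX)$, write $\xi_{L,\alpha}(X(q))=T_qX(v)$; applying $T\tau_Q$ and using that $\xi_{L,\alpha}$ is a SODE on the left-hand side and $\tau_Q\circ X=\Id_Q$ on the right-hand side yields $X(q)=v$, hence $\xi_{L,\alpha}\circ X=TX\circ X$.

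The computational heart is (i)~$\Leftrightarrow$~(ii). Pulling the dynamical equation~\eqref{dynamics_Lagrangian} back along $X$ gives $X^*(\contr{\xi_{L,\alpha}}\omega_L)=\dd(E_L\circ X)+X^*\alpha$. A pointwise computation shows $\bigl(X^*(\contr{\xi_{L,\alpha}}\omega_L)\bigr)_q(v)=\omega_L\bigl(\xi_{L,\alpha}(X(q)),T_qX(v)\bigr)$ and $\bigl(\contr{X}(X^*\omega_L)\bigr)_q(v)=\omega_L\bigl(T_qX(X(q)),T_qX(v)\bigr)$, so equation~\eqref{equation_theorem_generalized_solution} is equivalent to $\omega_L\bigl(W(q),T_qX(v)\bigr)=0$ for all $q$ and all $v\in T_qQ$, where $W(q):=\xi_{L,\alpha}(X(q))-T_qX(X(q))$. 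It then remains to show that this forces $W\equiv 0$, which is exactly $X$-relatedness, i.e.\ (i); the converse is immediate, since $W\equiv 0$ makes the displayed difference of $1$-forms vanish.

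The hard part, and the step needing the most care, is precisely this last implication. First, $T\tau_Q(W(q))=0$ (SODE on one summand, $\tau_Q\circ X=\Id_Q$ on the other), so $W(q)$ is $\tau_Q$-vertical. Since $L$ is regular, the vertical subbundle $V(TQ)$ is isotropic — in fact Lagrangian — for $\omega_L$, so $\omega_L(W(q),\cdot)$ also vanishes on $V_{X(q)}(TQ)$. Finally, because $X$ is a section, $T_{X(q)}(\Im X)$ and $V_{X(q)}(TQ)$ are complementary $n$-planes in $T_{X(q)}(TQ)$; hence $\omega_L(W(q),\cdot)=0$ on all of $T_{X(q)}(TQ)$, and nondegeneracy of $\omega_L$ gives $W(q)=0$. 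Note that without the SODE property of $\xi_{L,\alpha}$ the vector $W$ need not be vertical, and vanishing of $\omega_L(W,\cdot)$ on the single $n$-plane $T(\Im X)$ alone would not suffice.
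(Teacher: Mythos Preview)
Your proof is correct and follows the same approach as the paper: the paper pulls back the dynamical equation and uses $X$-relatedness to get $X^*(\contr{\xi_{L,\alpha}}\omega_L)=\contr{X}(X^*\omega_L)$, declaring (iii) and (iv) ``trivial'' and deferring the converse (ii)~$\Rightarrow$~(i) to Theorem~1 of Cari\~nena et al.\ \cite{carinena_geometric_2006}. Your write-up is more self-contained---you spell out the (iii)/(iv) equivalences and, for (ii)~$\Rightarrow$~(i), give the full argument (verticality of $W$ via the SODE property, isotropy of the vertical subbundle for $\omega_L$, and complementarity of $T(\Im X)$ and $V(TQ)$), which is precisely the argument in the cited reference.
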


\begin{proof}
The last two assertions are trivial. Let us now prove the equivalence between the first and the second statements.
  From the dynamical equation \eqref{dynamics_Lagrangian} we have
  \begin{equation}
    X^*(\contr{\xi_{L,\alpha}}\omega_L)=X^*(\dd E_L+\alpha) = \dd (E_L\circ X) +X^*\alpha.
  \end{equation}
  Since $X$ and $\xi_{L,\alpha}$ are $X$-related, we can write
  \begin{equation}
    X^*(\contr{\xi_{L,\alpha}}\omega_L)=\contr{X}(X^*\omega_L),
  \end{equation}
  which yields Eq.~\eqref{equation_theorem_generalized_solution}. 

  The proof of the converse is completely analogous to the one of Theorem 1 in Ref.~\cite{carinena_geometric_2006}.
\end{proof}

\subsection{Equivalence between Lagrangian and Hamiltonian Hamilton-Jacobi problems}
Given a forced Lagrangian system $(L, \alpha)$ on $TQ$ (with $L$ hyperregular), one can obtain an associated forced Hamiltonian system $(H,\beta)$ on $T^*Q$, where
\begin{equation}\begin{aligned}
    &H \circ \Leg = E_L,\\
    & \Leg^*\beta = \alpha.
\end{aligned}\end{equation}
Moreover, the dynamical vector fields $\xi_{L,\alpha}$ and $X_{H,\beta}$ (given by Eqs.~\eqref{dynamics_Lagrangian} and \eqref{Hamiltonian_dynamics_eq}, respectively) are $\Leg$-related, i.e.,
\begin{equation}
     T\Leg \circ\ \xi_{L,\alpha} = X_{H,\beta} \circ \Leg.
\end{equation} 

\begin{theorem}
Consider a hyperregular forced Lagrangian system $(L,\alpha)$ on $TQ$, with associated forced Hamiltonian system $(H,\beta)$ on $T^*Q$. Then $X$ is a (weak) solution of the Hamilton-Jacobi problem for $(L,\alpha)$ if and only if $\gamma=\Leg \circ X$ is a (weak) solution of the Hamilton-Jacobi problem for $(H,\beta)$.
\end{theorem}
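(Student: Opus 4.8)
The plan is to reduce the whole statement to transport along the Legendre diffeomorphism $\Leg\colon TQ\to T^*Q$, using the two compatibilities already recorded for the associated systems: $H\circ\Leg=E_L$ with $\Leg^*\beta=\alpha$, and the $\Leg$-relatedness $X_{H,\beta}\circ\Leg=T\Leg\circ\xi_{L,\alpha}$. As a preliminary step I would note that $\gamma=\Leg\circ X$ really is a $1$-form on $Q$: since $\pi_Q\circ\Leg=\tau_Q$ and $\tau_Q\circ X=\Id_Q$, one gets $\pi_Q\circ\gamma=\Id_Q$, so $\gamma$ is a section of $\pi_Q$. Moreover, being a diffeomorphism, $\Leg$ restricts to a diffeomorphism $\Im X\to\Im\gamma$; consequently ``$\Im X$ is a Lagrangian submanifold of $(TQ,\omega_L)$'' corresponds (via $\Leg^*\omega_Q=\omega_L$) to ``$\Im\gamma$ is a Lagrangian submanifold of $(T^*Q,\omega_Q)$'', and tangency/invariance of the corresponding vector fields is preserved under $\Leg$.

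The key intermediate identity I would establish is $X_{H,\beta}^\gamma=X$. Using the $\Leg$-relatedness of the dynamics together with the fact that $\xi_{L,\alpha}$ is a SODE, so that $T\tau_Q\circ\xi_{L,\alpha}=\Id_{TQ}$, one computes
\[
  X_{H,\beta}^\gamma=T\pi_Q\circ X_{H,\beta}\circ\Leg\circ X
   =T(\pi_Q\circ\Leg)\circ\xi_{L,\alpha}\circ X
   =T\tau_Q\circ\xi_{L,\alpha}\circ X=X.
\]
This is the step I expect to be the main obstacle, in the sense that it is the only place where the SODE condition and the $\Leg$-relatedness are both genuinely used; everything after it is formal manipulation.

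For the weak case, by definition $\gamma$ is a weak solution for $(H,\beta)$ iff $T\gamma\circ X_{H,\beta}^\gamma=X_{H,\beta}\circ\gamma$. Substituting $X_{H,\beta}^\gamma=X$, $\gamma=\Leg\circ X$ and $X_{H,\beta}\circ\Leg=T\Leg\circ\xi_{L,\alpha}$, this becomes $T\Leg\circ(TX\circ X)=T\Leg\circ(\xi_{L,\alpha}\circ X)$; since $T\Leg$ is a vector bundle isomorphism, it is equivalent to $TX\circ X=\xi_{L,\alpha}\circ X$, i.e.\ to $X$ being a weak solution for $(L,\alpha)$. (Alternatively, I could invoke the invariance characterizations already proved on each side: both conditions amount to invariance of $\Im X$, resp.\ $\Im\gamma$, and these correspond under $\Leg$ by the preliminary step.) Every implication here is reversible, which yields the ``if and only if'' for weak solutions.

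Finally, for the non-weak case it remains to match the closedness conditions, and this is already available: $\Leg\circ X$ is closed iff $X^*\omega_L=0$, equivalently $\gamma^*\omega_Q=X^*\Leg^*\omega_Q=X^*\omega_L$ while $\gamma^*\omega_Q=-\dd\gamma$ by the reproducing property of $\theta_Q$. Hence $X$ is a solution for $(L,\alpha)$ — that is, $X^*\omega_L=0$ and $X$ is a weak solution — if and only if $\gamma$ is closed and is a weak solution, i.e.\ $\gamma$ is a solution for $(H,\beta)$. Combined with the weak case, this completes the argument.
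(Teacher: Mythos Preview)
Your proof is correct and rests on the same ingredients as the paper's---the $\Leg$-relatedness $X_{H,\beta}\circ\Leg=T\Leg\circ\xi_{L,\alpha}$, the compatibility $\pi_Q\circ\Leg=\tau_Q$, and the correspondence of the closedness conditions via $\Leg^*\omega_Q=\omega_L$---but it is organized more cleanly. The paper treats the two implications of the weak case separately: assuming $X$ is a weak solution it first derives $T\gamma\circ X=X_{H,\beta}\circ\gamma$ and only then deduces $X=X_{H,\beta}^\gamma$ by composing with $T\pi_Q$; the converse is a second computation running through $T(\Leg^{-1})$. You instead isolate $X_{H,\beta}^\gamma=X$ as an unconditional identity, obtained from the SODE property $T\tau_Q\circ\xi_{L,\alpha}=\Id_{TQ}$, and then the weak-case equivalence becomes a single reversible chain. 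This buys you a symmetric argument and makes explicit that the identity $X_{H,\beta}^\gamma=X$ holds for \emph{any} vector field $X$ on $Q$, not merely for weak solutions; the paper's proof obscures this point.
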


\begin{proof}
Let $X$ be a weak solution of the Hamilton-Jacobi problem for $(L,\alpha)$.
Then
\begin{equation}
    T\gamma \circ X 
    = T\Leg \circ\ TX \circ X
    = T\Leg \circ\ \xi_{L,\alpha} \circ X
    = X_{H,\beta} \circ \Leg \circ X
    = X_{H,\beta} \circ \gamma,
\end{equation}
since $X$ and $\xi_{L,\alpha}$ are $X$-related.
Composing the left and right hand sides with $T\pi_Q$ from the left, we obtain
\begin{equation}
    X = T\pi_Q \circ X_{H,\beta} \circ \gamma.
\end{equation}
Then $X=X_{H,\beta}^\gamma$, and $\gamma$ is a weak solution of the Hamilton-Jacobi problem for $(H,\beta)$.

Conversely, if $\gamma$ is a solution of the Hamilton-Jacobi problem for $(H,\beta)$, $X$ is $\gamma$-related to $X_{H,\beta}$. Moreover,
\begin{equation}
    \xi_{L,\alpha} \circ \Leg^{-1} = T \left(\Leg^{-1}  \right) \circ X_{H,\beta},
\end{equation}
and hence
\begin{equation}
    TX \circ X 
    = T \left(\Leg^{-1}\right) \circ T \gamma   \circ X
    = T \left(\Leg^{-1}\right) \circ X_{H,\beta} \circ \gamma 
    = \xi_{L,\alpha} \circ \Leg^{-1} \circ \gamma
    = \xi_{L,\alpha} \circ X,
\end{equation}
so $X$ is a weak solution of the Lagrangian Hamilton-Jacobi problem.

Obviously, the Lagrangian weak solution is a solution if and only if the associated Hamiltonian solution is a closed 1-form.
\end{proof}

This result could be extended for regular but not hyperregular Lagrangians (i.e., $\Leg$ is a local diffeomorphism), where it only holds in the open sets where $\Leg$ is a diffeomorphism.

\subsection{Complete solutions}

Complete solutions for the Hamilton-Jacobi problem are defined analogously to the ones in $T^*Q$ (see Definition \ref{def_complete_solutions}).

\begin{definition}
Let $U\subseteq \RR^n$ be an open set. A map $\Phi:Q\times U\to TQ$ is called \emph{complete solution of the Hamilton-Jacobi problem} for $(L,\alpha)$ if
\begin{enumerate}
\item $\Phi$ is a local diffeomorphism,
\item for any $\lambda=(\lambda_1,\ldots,\lambda_n)\in U$, the map
\begin{equation}
\begin{aligned}
  \Phi_\lambda: Q&\to TQ\\
  q&\mapsto \Phi_\lambda(q)=\Phi(q,\lambda_1,\ldots,\lambda_n)
\end{aligned}
\end{equation}
is a solution of the Hamilton-Jacobi problem for $(L,\alpha)$.

\end{enumerate}
\end{definition}

\begin{example}[Fluid resistance]\label{Example_fluid_1D_Lagrangian}
Consider the 1-dimensional Rayleigh system $(L,\mathcal{R})$ \cite{de_leon_symmetries_2021,lopez-gordon_geometry_2021}, with
\begin{equation}
  L=\frac{m}{2}\dot{q}^2, \qquad
  \mathcal{R}=\frac{k}{3}\dot q^3.
\end{equation}
Then
\begin{equation}
  X=\frac{\lambda}{m} e^{-kq/m} \frac{\partial  } {\partial q}
\end{equation}
is a complete solution of the Hamilton-Jacobi problem. Clearly, $\Im X$ is a Lagrangian submanifold.
As a matter of fact,
\begin{equation}
  E_L \circ X = L \circ X =  \frac{\lambda^2}{2m} e^{-2kq/m},
\end{equation}
and
\begin{equation}
  X^*\bar{R} 
  = \frac{k\lambda^2}{m^2} e^{-2kq/m} \dd q
  = -\dd (E_L \circ X).
\end{equation}
The forced Euler-Lagrange vector field is given by
\begin{equation}
  \xi_{L,\bar{R}} = \dot{q} \frac{\partial  } {\partial q} -\frac{k}{m} \dot{q}^2 \frac{\partial  } {\partial \dot{q}},
\end{equation}
whose solutions are
\begin{equation}
  q(t) = \frac{m}{k} \log \left( \frac{\dot{q}_0 kt}{m} +1  \right)+ q_0,
\end{equation}
with initial conditions $q(0)=q_0$ and $\dot{q}(0)=\dot{q}_0$. Similarly, the integral curves of $X$ are given by
\begin{equation}
  q(t) = \frac{m}{k} \log \left( \frac{k\lambda t}{m^2} e^{-kq_0/m} + 1 \right) + q_0.
\end{equation}
Indeed, the integral curves of $\xi_{L,\bar{R}}$ are recovered by taking $\lambda=m\dot{q}_0 e^{kq_0/m}$.
\end{example}

\begin{example} \label{Example_H-J_Lagrangian_fluid_n_dimensions}
Consider the generalization of the previous example to $n$ dimensions, namely
\begin{equation}
  L = \frac{1}{2} \sum_{i=1}^n m_i \left( \dot{q}^i  \right)^2,\qquad
  \mathcal{R} = \frac{1}{3} \sum_{i=1}^n k_i  \left( \dot{q}^i  \right)^3.
\end{equation}
Consider the functions
\begin{equation}
  f_a = m_a e^{k_aq^a/m_a} \dot{q^a},\qquad a=1,\ldots,n
\end{equation}
Locally,
\begin{equation}
  \omega_L = \sum_{i=1}^n m_i \dd q^i\wedge \dd \dot q^i,
\end{equation}
and
\begin{equation}
  X_{f_a} = e^{k_aq^a/m_a} \frac{\partial  } {\partial q^a} 
  - \frac{k_a}{m_a} \dot{q}^a e^{k_aq^a/m_a} \frac{\partial  } {\partial \dot q^a},
\end{equation}
from where it is easy to see that
\begin{equation}
  \left\{f_a, E_L  \right\} = [f_a, \mathcal{R}] = k_a \left( \dot{q}^a  \right)^2 e^{k_aq^a/m_a},
\end{equation}
so, by Eq.~\eqref{double_bracket_Rayleigh}, $f_a$ are constants of the motion.
In addition,
\begin{equation}
  \left\{f_a, f_b  \right\} = 0,
\end{equation}
for every $a,b=1,\ldots,n$, so they are in involution. Then
\begin{equation}
  X=\sum_{i=1}^n \frac{\lambda_i}{m_i} e^{-k_iq^i/m_i} \frac{\partial  } {\partial q^i}
\end{equation}
is a complete solution of the Hamilton-Jacobi problem. See Example \ref{example_Hamiltonian_drag} for the Hamiltonian counterpart of this example.

\end{example}

\begin{example}[Free particle with a homogeneous linear Rayleigh force]
Consider a linear Rayleigh system $(L,\mathcal{R})$, with
\begin{equation}
  L = \frac{1}{2} g_{ij} \dot q^i \dot q^j,   
\end{equation} 
where $g$ does not depend on $q$ (for instance, $g_{ij}=m_i \delta_{ij}$, with $m_i$ the mass of the $i$-th particle), and 
\begin{equation}
  \mathcal{R} = \frac{1}{2} R_{ij} \dot q^i \dot q^j,
\end{equation}
where $R_{ij}$ does not depend on $q$. Then the Hamilton-Jacobi equation can be locally written as
\begin{equation}
  X^k \left(g_{jk} \frac{\partial X^j} {\partial q^i} + R_{ik}  \right) = 0,
\end{equation}
so a complete solution is given by
\begin{equation}
  X_\lambda = \left(\lambda^i - R^i_{j} q^j  \right) \frac{\partial  } {\partial q^i},
\end{equation}
where $R^i_{j} = g^{ik} R_{jk}$. See Example \ref{example_Hamiltonian_homogeneous_Rayleigh} for the Hamiltonian counterpart of this example.

\end{example}
 

\section{Reduction and reconstruction of the Hamilton-Jacobi problem} \label{section_reduction}

Let $G$ be a connected Lie group acting freely and properly on $Q$ by a left action $\Phi$, namely
\begin{equation}
\begin{aligned}
  \Phi: G\times Q &\to Q \\
  (g,q) &\mapsto \Phi (g,q) = g\cdot q. 
\end{aligned}
\end{equation}
As usual, we denote by $\mathfrak{g}$ the Lie algebra of $G$, and denote the dual of $\mathfrak{g}$ by $\mathfrak{g}^*$.
For each $g\in G$, we can define a diffeomorphism
\begin{equation}
\begin{aligned}
  \Phi_g : Q & \to Q\\
  q & \mapsto \Phi_g(q) =  \Phi (g,q) = g\cdot q.
\end{aligned}
\end{equation}
Under these conditions, the quotient space $Q/G$ is a differentiable manifold and $\pi_G: Q \to Q/G$ is a $G$-principal bundle. The action $\Phi$ induces a lifted action $\Phi^{T^*}$ on $T^*Q$ given by
\begin{equation}
  \Phi_g^{T^*}(\alpha_q) = \left(T\Phi_{g^{-1}}  \right)^* (gq) (\alpha_q) \in T_{gq}^*Q,
\end{equation}
for every $\alpha_q\in T_q^*Q$. Since $\Phi$ is a diffeomorphism, its lift to $T^*Q$ leaves $\theta_Q$ invariant \cite{abraham_foundations_2008}, in other words, $\theta_Q$ is $G$-invariant.

The \emph{natural momentum map} $J:T^*Q\to \mathfrak{g}$ is given by
\begin{equation}
  \left\langle J(\alpha_q), \xi  \right\rangle 
  = \left(\contr{\xi_Q^c} \theta_Q \right) (\alpha_q)
\end{equation}
for each $\xi \in \mathfrak{g}$. Here $\xi_Q$ is the infinitesimal generator of the action of $\xi \in \mathfrak{g}$ on $Q$, and $\xi_Q^c$ is the generator of the lifted action on $T^*Q$. The natural momentum map is $G$-equivariant for the lifted action on $T^*Q$. For each $\xi \in \mathfrak{g}$, we have a function $J^\xi:T^*Q\to \RR$ given by
\begin{equation}
  J^\xi (\alpha_q) = \left\langle J(\alpha_q), \xi  \right\rangle,
\end{equation}
that is,
\begin{equation}
  J^\xi = \contr{\xi_Q^c} \theta_Q.
\end{equation}

A vector field $Z$ on $Q$ defines an associated function $\iota Z$ on $T^*Q$. Locally, if
\begin{equation}
  X = X^i \frac{\partial  } {\partial q^i},
\end{equation}
then
\begin{equation}
  \iota X = X^i p_i.
\end{equation}
Given a vector field $X$ on $Q$, its \emph{complete lift} \cite{yano_tangent_1973} is a vector field $X^c$ on $T^*Q$ such that
\begin{equation}
  X^c (\iota Z) = \iota [X, Z]
\end{equation}
for any vector field $Z$ on $Q$. Locally, if $X$ has the form above, then 
\begin{equation}
  X^c = X^i \frac{\partial  } {\partial q^i} - p_j \frac{\partial X^j} {\partial q^i} \frac{\partial  } {\partial p_i}.
\end{equation}

\begin{definition}
  A $G$-invariant forced Hamiltonian system $(H,\beta)$ is a forced Hamiltonian system $(H,\beta)$ such that $H$ and $\beta$ are both $G$-invariant, namely,
  \begin{equation}
    \xi_Q^c (H) = 0,
  \end{equation}
  and
  \begin{equation}
    \liedv{\xi_Q^c} \beta = 0,
  \end{equation}
   for each $\xi \in \mathfrak{g}$.
\end{definition}

If $H$ is known to be $G$-invariant, the subgroup $G_\beta$ such that $(H,\beta)$ is $G_\beta$-invariant can be found through the following lemma.

\begin{lemma}\label{lemma_g_invariance}
 Consider a forced Hamiltonian system $(H,\beta)$. Suppose that $H$ is $G$-invariant. Let $\xi\in \mathfrak{g}$. Then, the following statements hold:
\begin{enumerate}
\item $J^\xi$ is a constant of the motion if and only if
\begin{equation}
  \beta(\xi_Q^c) = 0.
\end{equation}
\item If the previous equation holds, then $\xi$ leaves $\beta$ invariant if and only if
\begin{equation}
  \contr{\xi_Q^c} \dd \beta = 0.
\end{equation}
\end{enumerate}
Moreover, the vector subspace
\begin{equation}
  \mathfrak{g}_\beta = \left\{\xi \in \mathfrak{g} \mid \beta(\xi_Q^c) = 0, \contr{\xi_Q^c} \dd \beta = 0  \right\} \label{subalgebra}
\end{equation}
is a Lie subalgebra of $\mathfrak{g}$.
\end{lemma}

\begin{proof}
Let us prove the first statement. We have that
\begin{equation}
  \dd J^\xi = \dd \left(\contr{\xi_Q^c} \theta_Q  \right)
  = \liedv{\xi_Q^c} \theta_Q - \contr{\xi_Q^c} \dd \theta_Q
  = - \contr{\xi_Q^c} \dd \theta_Q =  \contr{\xi_Q^c} \omega_Q,
\end{equation}
where we have used that $\theta_Q$ is $\mathfrak{g}$-invariant. Contracting with the dynamical vector field $X_{H,\beta}$ (given by Eq.~\eqref{Hamiltonian_dynamics_eq}) yields
\begin{equation}
\begin{aligned}
  X_{H,\beta} (J^\xi) 
  &= \contr{X_{H,\beta}} \dd J^\xi 
  = \contr{X_{H,\beta}} \contr{\xi_Q^c} \omega_Q
  =  - \contr{\xi_Q^c} \contr{X_{H,\beta}} \omega_Q
  = - \contr{\xi_Q^c} (\dd H + \beta)\\
 & = -\xi_Q^c(H) - \beta(\xi_Q^c)
  = -\beta(\xi_Q^c),
\end{aligned}
\end{equation}
since $H$ is $\mathfrak{g}$-invariant, so, by Eq.~\eqref{conserved_quantity_Hamiltonian}, $J^\xi$ is a constant of the motion if and only if $\beta(\xi_Q^c)$ vanishes. The proofs of the other statements are completely analogous to the ones of Lagrangian counterpart of this lemma 
(see references \cite{de_leon_symmetries_2021,lopez-gordon_geometry_2021}; see also reference \cite{bloch_euler-poincare_1996}).
\end{proof}

In the following paragraphs, we shall briefly recall some results we will make use of (see references \cite{de_leon_hamiltonjacobi_2017,marsden_hamiltonian_2007} and references therein for more details).

A \emph{$G$-invariant Lagrangian submanifold} $\mathfrak{L}\subset T^*Q$ is a Lagrangian submanifold in $T^*Q$ such that $\Phi_g^{T^*}(\mathfrak{L}) = \mathfrak{L}$ for all $g\in G$. Given the momentum map $J$ defined above and a Lagrangian submanifold $\mathfrak{L}\subset T^*Q$, it can be shown that $J$ is constant along $\mathfrak{L}$ if and only if $\mathfrak{L}$ is $G$-invariant.

The quotient space $T^*Q/G$ has a Poisson structure induced by the canonical symplectic structure on $T^*Q$, such that $\pi : T^*Q \to T^*Q/G$ is a Poisson morphism.

Denote by $G_\beta\subset G$ the Lie subgroup whose Lie algebra is $\mathfrak{g}_\beta$, defined by \eqref{subalgebra}. Let $\mu\in \mathfrak{g}^*$. Let us assume that $(G_\beta)_\mu = G$.
Then $J^{-1}(\mu)/G$ is a symplectic leaf of $T^*Q/G$. Moreover, $J^{-1}(\mu)$ is a coisotropic submanifold. Assume that $\mathfrak{L}\subset J^{-1}(\mu)$ is a Lagrangian submanifold. Then, by the Coisotropic Reduction Theorem \cite{abraham_foundations_2008}, $\pi(\mathfrak{L})$ is a Lagrangian submanifold to $J^{-1}(\mu)/G$.

Furthermore, it can be shown that $J^{-1}(\mu)/G$ is diffeomorphic to the cotangent bundle $T^*(Q/G)$. In addition, if $J^{-1}(\mu)/G$ is endowed with the symplectic structure $\omega_\mu$ given by the Marsden-Weinstein reduction procedure, it is symplectomorphic to $T^*(Q/G)$ endowed with a modified symplectic structure $\tilde{\omega}_{Q/G}$. This modified symplectic form is given by the canonical symplectic form plus a magnetic term, namely
\begin{equation}
  \tilde{\omega}_{Q/G} = \omega_{Q/G} + B_\mu,
\end{equation}
where $\omega_{Q/G}$ is the canonical symplectic form on $T^*(Q/G)$.

Combining the previous paragraphs, $\pi(\mathfrak{L})$ can be seen as a Lagrangian submanifold of a cotangent bundle with a modified symplectic structure. Let $\tilde{\mathfrak{g}}^*$ denote the adjoint bundle to $\pi_G:Q\to Q/G$ via the coadjoint representation, 
\begin{equation}
  \tilde{\mathfrak{g}}^* = Q\times_{G} \mathfrak{g}^*.
\end{equation}
A connection $A$ on $\pi_G:Q\to Q/G$ induces a splitting
\begin{equation}
  T^*Q/G \equiv T^*(Q/G)  \times_{Q/G} \tilde{\mathfrak{g}}^*.
\end{equation}
This identification is given by
\begin{equation}
\begin{aligned}
  \Psi: T^*Q/G &\to T^*(Q/G)  \times_{Q/G} \tilde{\mathfrak{g}}^*  \\
  [\alpha_q] & \mapsto \left[ \left( \alpha_q \circ \hor_q, J(\alpha_q)  \right)  \right],
\end{aligned}
\end{equation}
where $\hor_q: T_{\pi_G(q)}(Q/G)\to \mathcal{H}_q$ denotes the horizontal lift of the connection $A$.
Here $\mathcal{H}_q$ denotes the horizontal space of $A$ at $q\in Q$, namely
\begin{equation}
  \mathcal{H}_q = \left\{v_q \in T_q Q \mid A(v_q) = 0  \right\}.
\end{equation}
If $\alpha_q\in J^{-1}(\mu)$, then $J(\alpha_q)=\mu$ and $\Psi([\alpha_q])=(\alpha_q \circ \hor_q, \mu)$, so
\begin{equation}
  \Psi \left(J^{-1}(\mu)/G  \right) = T^*(Q/G) \times_{Q/G} \left(Q\times \left\{\mu  \right\}/G  \right)
  \equiv T^*(Q/G).
\end{equation}

Consider a $G$-invariant forced Hamiltonian system $(H,\beta)$ on $T^*Q$. Then $H=H_G\circ \pi$, where $H_G:T^*Q/G\to \RR$ is the reduced Hamiltonian on $T^*Q/G$. Similarly, $\beta = \pi^* \beta_G$, where $\beta_G$ is the reduced external force on $T^*Q/G$. Moreover, we can define the reduced Hamiltonian $\tilde{H}_\mu$ on $T^*(Q/G)$ by
\begin{equation}
  \tilde{H}_\mu (\tilde{\alpha}_{\tilde{q}}) = \tilde{H} (\tilde{\alpha}_{\tilde{q}}, [q,\mu]),
\end{equation}
for each $(\tilde{\alpha}_{\tilde{q}})\in T_{\tilde q}^*(Q/G)$, where $\tilde{q} = [q] \in Q/G$ and $\tilde{H}=H_G\circ \Psi^{-1}$. Similarly, let $\tilde{\beta}= (\psi^{-1})^* \beta_G$, and
\begin{equation}
  \tilde{\beta}_\mu (\tilde{\alpha}_{\tilde{q}}) = \tilde \beta (\tilde{\alpha}_{\tilde{q}}, [q,\mu]),
\end{equation}
for each $(\tilde{\alpha}_{\tilde{q}})\in T_{\tilde q}^*(Q/G)$.

Let $\gamma$ be a $G$-invariant solution of the Hamilton-Jacobi problem for $(H,\beta)$. Then, the following diagram commutes:
\begin{center}
\begin{tikzcd}
                                                                                                     & \tilde{\mathfrak{g}}^*                                                              &                                                                                           &                                                                                                                                                         \\
Q \arrow[r, "\gamma"] \arrow[d, "\pi_G"] \arrow[rrdd, "\tilde \gamma", bend right=67, shift right=3] & T^*Q \arrow[l, "\hspace{-.25cm} \pi_Q"', bend left] \arrow[d, "\pi"] \arrow[rr, "H"] \arrow[u, "J"] &                                                                                           & \mathbb{R}                                                                                                                                              \\
\frac Q G \arrow[rrd, "\tilde \gamma_\mu"]                                                           & \frac {T^*Q}{G} \arrow[l, "p"'] \arrow[rr, "\psi", shift left] \arrow[rru, "H_G"]   &                                                                                           & \hspace{-.1cm} T^*\left(\frac Q G\right)  \hspace{-.1cm}   \times_{ \hspace{-.1cm}\frac{Q}{G}} \hspace{-.07cm}\tilde{\mathfrak{g}}^* \hspace{-.1cm}   \arrow[ll, "\psi^{-1}", shift left] \arrow[u, "\tilde H"] \arrow[ld, dashed] \\
                                                                                                     &                                                                                     & T^*\left(\frac {Q} {G}\right) \arrow[ruu, "\tilde{H}_\mu"', bend right=90, shift right=6] &                                                                                                                                                        
\end{tikzcd}
\end{center}

\begin{proposition}[Reduction]
Let $\gamma$ be a $G$-invariant solution of the Hamilton-Jacobi problem for $(H,\beta)$. Let $\mathfrak{L} = \Im \gamma$, and $\tilde{\mathfrak{L}}=\Psi\circ \pi(L)$. Then $\gamma$ induces a mapping $\tilde \gamma_\mu$ such that $\Im \tilde \gamma_\mu = \tilde {\mathfrak{L}}$ and $\tilde \gamma_\mu$ is a solution the Hamilton-Jacobi problem for $(\tilde{H}_\mu, \tilde{\beta}_\mu)$.
\end{proposition}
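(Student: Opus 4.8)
The plan is to verify the two defining properties of a solution of the Hamilton--Jacobi problem for $(\tilde H_\mu, \tilde\beta_\mu)$ — namely, that $\tilde\gamma_\mu$ is a closed $1$-form on $Q/G$, and that it satisfies the Hamilton--Jacobi equation $\dd(\tilde H_\mu\circ\tilde\gamma_\mu) = -\tilde\gamma_\mu^*\tilde\beta_\mu$ — and to do so by transporting everything through the identifications in the commutative diagram. First I would need to make precise \emph{how} $\gamma$ induces $\tilde\gamma_\mu$: since $\gamma$ is $G$-invariant and a solution, $\mathfrak{L}=\Im\gamma$ is a $G$-invariant Lagrangian submanifold, so by the result recalled in the excerpt, $J$ is constant on $\mathfrak{L}$, say with value $\mu$; thus $\mathfrak{L}\subset J^{-1}(\mu)$. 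By coisotropic reduction, $\pi(\mathfrak{L})$ is a Lagrangian submanifold of $J^{-1}(\mu)/G$, and via $\Psi$ we get a Lagrangian submanifold $\tilde{\mathfrak{L}} = \Psi\circ\pi(\mathfrak{L})$ of $T^*(Q/G)$ (with its modified symplectic structure $\tilde\omega_{Q/G}$). One must check $\tilde{\mathfrak{L}}$ is the image of a section $\tilde\gamma_\mu: Q/G\to T^*(Q/G)$ — this follows because $\gamma$ being a section of $\pi_Q$ that is $G$-equivariant descends to a section of the reduced bundle, and the $\Psi$-component $\alpha_q\mapsto\alpha_q\circ\hor_q$ is fiber-linear, so the image remains a graph over $Q/G$. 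That $\tilde\gamma_\mu$ is closed is then equivalent to $\tilde{\mathfrak{L}}$ being Lagrangian in $(T^*(Q/G),\omega_{Q/G})$; but here a subtlety appears, since $\tilde{\mathfrak{L}}$ is Lagrangian for the \emph{magnetic} form $\tilde\omega_{Q/G}=\omega_{Q/G}+B_\mu$, not necessarily the canonical one. I would resolve this by noting the paper's Hamilton--Jacobi equation \eqref{HJ_eq} has the closedness of $\gamma$ built into it separately from the equation itself; so I expect the intended statement uses the modified symplectic structure on the reduced side, and "solution of the HJ problem for $(\tilde H_\mu,\tilde\beta_\mu)$" should be read on $(T^*(Q/G),\tilde\omega_{Q/G})$.

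The second, main step is the Hamilton--Jacobi equation itself. The cleanest route is the intrinsic characterization from Theorem \ref{theorem_HJ}: $\gamma$ is a solution iff $\Im\gamma$ is Lagrangian and $X_{H,\beta}$ is tangent to it. So I would show that $X_{\tilde H_\mu,\tilde\beta_\mu}$, the dynamical vector field of the reduced forced system on $(T^*(Q/G),\tilde\omega_{Q/G})$, is tangent to $\tilde{\mathfrak{L}}$. This requires two things: first, that the reduction procedure takes the dynamics of $(H,\beta)$ on $T^*Q$ to the dynamics of $(\tilde H_\mu,\tilde\beta_\mu)$ on the reduced space — i.e. $X_{H,\beta}$ is $\pi$-related (restricted to $J^{-1}(\mu)$, then pushed through $\Psi$) to $X_{\tilde H_\mu,\tilde\beta_\mu}$; and second, that tangency is preserved under these maps. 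The first point is where the hypotheses $\xi_Q^c(H)=0$ and $\mathcal{L}_{\xi_Q^c}\beta=0$ (and $\beta(\xi_Q^c)=0$, which holds because $\mathfrak{L}\subset J^{-1}(\mu)$ and $J$ is preserved by the flow) enter: $G$-invariance of $H$ and $\beta$ ensures $X_{H,\beta}$ descends, and constancy of $J$ along the flow ensures it is tangent to $J^{-1}(\mu)$ so that the restriction makes sense. One has to track carefully that the magnetic term $B_\mu$ in $\tilde\omega_{Q/G}$ is precisely what accounts for the difference between $\dd H_G$ and the naive reduction; this is the standard cotangent bundle reduction computation, and $\tilde H_\mu = \tilde H(\cdot,[q,\mu])$, $\tilde\beta_\mu=\tilde\beta(\cdot,[q,\mu])$ are defined exactly so that $\iota_{X_{\tilde H_\mu,\tilde\beta_\mu}}\tilde\omega_{Q/G} = \dd\tilde H_\mu + \tilde\beta_\mu$ holds on $T^*(Q/G)$.

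Having established both relatedness and preservation of tangency, the conclusion is immediate: $X_{H,\beta}$ is tangent to $\mathfrak{L}$ (since $\gamma$ solves HJ), hence its restriction to $J^{-1}(\mu)$ is tangent to $\mathfrak{L}\subset J^{-1}(\mu)$, hence the pushforward $X_{\tilde H_\mu,\tilde\beta_\mu}$ is tangent to $\pi(\mathfrak{L})$ and then to $\tilde{\mathfrak{L}}=\Psi(\pi(\mathfrak{L}))$. By the reduced analogue of Theorem \ref{theorem_HJ} (which applies verbatim on $(T^*(Q/G),\tilde\omega_{Q/G})$ since the theorem only used that $\omega_Q$ is symplectic and $\pi_Q$ a cotangent projection — both true for the modified structure and the bundle projection $T^*(Q/G)\to Q/G$), this is exactly the statement that $\tilde\gamma_\mu$ is a solution of the Hamilton--Jacobi problem for $(\tilde H_\mu,\tilde\beta_\mu)$, with $\Im\tilde\gamma_\mu=\tilde{\mathfrak{L}}$ by construction.

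I expect the main obstacle to be bookkeeping rather than conceptual: precisely because so many maps are involved ($\gamma$, $\pi_Q$, $\pi_G$, $\pi$, $\psi=\Psi$, $J$, the horizontal lift $\hor$), the delicate point is verifying that $\tilde\gamma_\mu$ as defined through the diagram is genuinely a \emph{section} of $T^*(Q/G)\to Q/G$ and that the forced dynamics intertwines correctly with the magnetic symplectic form — i.e., reconciling the Marsden--Weinstein reduced form $\omega_\mu$ with $\tilde\omega_{Q/G}=\omega_{Q/G}+B_\mu$ and checking the external force $\beta$ descends compatibly with that identification. Everything else is an application of results already recalled (coisotropic reduction, cotangent bundle reduction, the characterization in Theorem \ref{theorem_HJ}).
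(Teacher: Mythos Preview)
Your proposal is correct but follows a different route from the paper. The paper argues directly at the level of the Hamilton--Jacobi equation: since $\gamma^*(\dd H + \beta) = 0$, the form $\dd H + \beta$ vanishes along $\mathfrak{L}$ (in the pullback sense), hence $\dd H_G + \beta_G$ vanishes along $\pi(\mathfrak{L})$, and then $\dd\tilde H_\mu + \tilde\beta_\mu$ vanishes along $\tilde{\mathfrak{L}} = \Psi(\pi(\mathfrak{L}))$ by the very definitions of $\tilde H_\mu$ and $\tilde\beta_\mu$; the section $\tilde\gamma_\mu$ is then produced in one line from the $G$-invariance of $\gamma$. At no point does the paper reduce the dynamical vector field or invoke the tangency characterization of Theorem~\ref{theorem_HJ}. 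Your approach instead reduces the \emph{dynamics}: show $X_{H,\beta}$ descends to $X_{\tilde H_\mu,\tilde\beta_\mu}$ and that tangency to $\mathfrak{L}$ is preserved under $\Psi\circ\pi$. This is more structural and explains more clearly why the hypotheses are needed (e.g.\ $\beta(\xi_Q^c)=0$ so that $X_{H,\beta}$ is tangent to $J^{-1}(\mu)$, via Lemma~\ref{lemma_g_invariance}), and it properly confronts the magnetic-term subtlety you flag --- whereas the paper's shortcut, working only with the pulled-back $1$-forms, is shorter but leaves the closedness-versus-Lagrangian issue for the modified symplectic form implicit.
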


\begin{proof}
Since $\dd H +\beta$ vanishes along $\mathfrak{L}$, clearly $\dd H_G + \beta_G$ vanishes along $\pi(\mathfrak{L})$. If $\tilde{\alpha}_{\tilde q}\in \tilde{\mathfrak{L}}$, then $\psi^{-1}(\tilde{\alpha}_{\tilde q}) \in \pi (\mathfrak{L})$, 
\begin{equation}
  \tilde{H}_\mu (\tilde{\alpha}_{\tilde q}) = H_G \circ \Psi^{-1} (\tilde{\alpha}_{\tilde q}, \mu),
\end{equation}
and
\begin{equation}
  \tilde{\beta}_\mu (\tilde{\alpha}_{\tilde{q}}) = (\Psi^{-1})^* \beta_G (\tilde{\alpha}_{\tilde{q}}, \mu),
\end{equation}
so
\begin{equation}
  \left(\tilde{\beta}_\mu + \dd \tilde{H}_\mu   \right) (\tilde{\alpha}_{\tilde{q}})
  = \left(\dd H_G + \beta_G  \right) \left(\Psi^{-1}(\tilde{\alpha}_{\tilde{q}}, \mu)\right) = 0.
\end{equation}
Since $\gamma$ is $G$-invariant, it induces a mapping $\tilde{\gamma}: Q \to T^*(Q/G)$ which is also $G$-invariant. This mapping in turn induces a reduced solution $\tilde \gamma_\mu:Q/G\to T^*(Q/G)$ such that $\tilde{\gamma}=\pi_G^*\tilde \gamma_\mu$ and $\Im \tilde \gamma_\mu = \tilde{\mathfrak{L}}$. 
\end{proof}

\begin{proposition}[Reconstruction]
Let $\tilde{\mathfrak{L}}$ be a Lagrangian submanifold of $(T^*(Q/G), \omega_{Q/G}+B_\mu)$ for some $\mu\in \mathfrak{g}^*$ which is a fixed point of the coadjoint action. Assume that $\tilde{\mathfrak{L}} = \Im \tilde \gamma_\mu$, where $\tilde \gamma_\mu$ is a solution of the Hamilton-Jacobi problem for $(\tilde H_\mu, \tilde \beta_\mu)$. Let 
\begin{equation}
  \hat{\mathfrak{L}} = \left\{ (\tilde{\alpha}_{\tilde q}, [\mu]_{\tilde q}) \in T^*(Q/G) \times_{Q/G} \tilde{\mathfrak{g}}^*\mid \tilde{\alpha}_{\tilde q} \in \tilde{\mathfrak{L}}  \right\},
\end{equation}
and take
\begin{equation}
  \mathfrak{L} = \pi^{-1} \circ \Psi^{-1} (\hat{\mathfrak{L}}).
\end{equation}
Then
\begin{enumerate}
\item $\mathfrak{L}$ is a $G$-invariant Lagrangian submanifold of $T^*Q$,
\item $\mathfrak{L}= \Im \gamma$, where $\gamma$ is a solution of the Hamilton-Jacobi problem for $(H, \beta)$.
\end{enumerate}
\end{proposition}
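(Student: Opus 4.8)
The plan is to run the proof of the Reduction Proposition backwards, organising it around three facts. First, the Lagrangian property transports along the tower $\mathfrak{L}\longleftarrow\Psi^{-1}(\hat{\mathfrak L})\longleftrightarrow\tilde{\mathfrak L}$ because each step is either a symplectomorphism or the coisotropic reduction by the $G$-action. Second, $H$, $\beta$ and their reduced versions are all related by pullback along the submersion $\pi$ and the diffeomorphism $\Psi$, so the Hamilton--Jacobi equation, read as ``$\dd H+\beta$ pulls back to zero on $\Im\gamma$'', is preserved under these operations. Third, the freeness of the $G$-action upgrades the section $\tilde\gamma_\mu$ of $T^*(Q/G)\to Q/G$ to a section $\gamma$ of $T^*Q\to Q$.

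For (i): since $\mu$ is a fixed point of the coadjoint action, $G_\mu=G$, hence $J^{-1}(\mu)$ is a $G$-invariant coisotropic submanifold of $T^*Q$ whose characteristic distribution (the kernel of $\omega_Q|_{J^{-1}(\mu)}$) is exactly the distribution tangent to the $G$-orbits, and the Marsden--Weinstein quotient $(J^{-1}(\mu)/G,\omega_\mu)$ is symplectomorphic, via the restriction $\bar\Psi$ of $\Psi$, to $(T^*(Q/G),\omega_{Q/G}+B_\mu)$. Also, because $\mu$ is coadjoint-fixed, $[q]\mapsto[q,\mu]$ is a well-defined section of $\tilde{\mathfrak g}^*$, so $\hat{\mathfrak L}$ is the graph of that section over $\tilde{\mathfrak L}$ and corresponds, under $T^*(Q/G)\times_{Q/G}(Q\times\{\mu\}/G)\equiv T^*(Q/G)$, exactly to $\tilde{\mathfrak L}$; therefore $\bar{\mathfrak L}:=\Psi^{-1}(\hat{\mathfrak L})=\bar\Psi^{-1}(\tilde{\mathfrak L})$ is a Lagrangian submanifold of $(J^{-1}(\mu)/G,\omega_\mu)$. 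Now $\mathfrak L=\pi^{-1}(\bar{\mathfrak L})$ is the preimage of $\bar{\mathfrak L}$ under the principal bundle projection $\pi_\mu:J^{-1}(\mu)\to J^{-1}(\mu)/G$, hence a $G$-invariant submanifold of $J^{-1}(\mu)\subset T^*Q$ of dimension $\dim\bar{\mathfrak L}+\dim G=(\dim Q-\dim G)+\dim G=\dim Q$. For $v,w\in T_\alpha\mathfrak L\subset T_\alpha J^{-1}(\mu)$ one has $\omega_Q(v,w)=\omega_\mu(T\pi_\mu\,v,T\pi_\mu\,w)=0$, since the orbit (null) directions contribute nothing and $\bar{\mathfrak L}$ is isotropic; thus $\mathfrak L$ is isotropic of half the dimension of $T^*Q$, i.e.\ Lagrangian, which is (i).

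For (ii): since $\tilde{\mathfrak L}=\Im\tilde\gamma_\mu$ is a section of $T^*(Q/G)\to Q/G$ and $\Psi^{-1}$ covers the identity on $Q/G$, $\bar{\mathfrak L}=\Im\bar\gamma$ for a smooth section $\bar\gamma:Q/G\to T^*Q/G$. Given $q\in Q$, the $G$-orbit $\bar\gamma(\pi_G(q))\subset T^*Q$ projects under $\pi_Q$ onto the orbit $\pi_G^{-1}(\pi_G(q))\ni q$, and by freeness of the action it meets $T^*_qQ$ in exactly one point, which we call $\gamma(q)$; using a local trivialisation of $\pi_G$ one checks that $\gamma$ is smooth, and by construction $\Im\gamma=\pi^{-1}(\bar{\mathfrak L})=\mathfrak L$ and $\gamma$ is $G$-invariant. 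As $\Im\gamma=\mathfrak L$ is Lagrangian by (i), $\gamma$ is closed. Finally, to verify $\gamma^*(\dd H+\beta)=0$, note that $\dd H+\beta=\pi^*(\dd H_G+\beta_G)$ because $H=H_G\circ\pi$ and $\beta=\pi^*\beta_G$; since $T\pi$ maps $T\mathfrak L$ onto $T\bar{\mathfrak L}$, it suffices that $\dd H_G+\beta_G$ pull back to zero on $\bar{\mathfrak L}$. On $J^{-1}(\mu)/G$ the definitions of $\tilde H_\mu$ and $\tilde\beta_\mu$ give $H_G=\tilde H_\mu\circ\bar\Psi$ and $\beta_G=\bar\Psi^*\tilde\beta_\mu$ there, so the pullback of $\dd H_G+\beta_G$ to $\bar{\mathfrak L}$ equals the pullback of $\bar\Psi^*(\dd\tilde H_\mu+\tilde\beta_\mu)$ to $\bar{\mathfrak L}=\bar\Psi^{-1}(\tilde{\mathfrak L})$, which vanishes since $\tilde\gamma_\mu$ solves the Hamilton--Jacobi problem for $(\tilde H_\mu,\tilde\beta_\mu)$, i.e.\ $\dd\tilde H_\mu+\tilde\beta_\mu$ pulls back to zero on $\tilde{\mathfrak L}=\Im\tilde\gamma_\mu$. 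Hence $\gamma$ solves the Hamilton--Jacobi problem for $(H,\beta)$.

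The main obstacle is item (i): keeping track of which symplectic form lives on which quotient, verifying that the characteristic foliation of $J^{-1}(\mu)$ is precisely the $G$-orbit foliation (this is exactly where the hypothesis that $\mu$ is coadjoint-fixed is genuinely used), and getting the dimension count right so that the isotropic $\mathfrak L$ turns out to be Lagrangian rather than merely isotropic. Once this is in place, the remaining ``diagram chase'' in (ii) collapses to the single observation that the condition ``$\dd(\text{Hamiltonian})+(\text{external force})$ restricts to zero on the Lagrangian submanifold'' is stable under pullback along submersions and under symplectomorphisms.
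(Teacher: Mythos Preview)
Your proposal is correct and follows essentially the same route as the paper: lift $\tilde\gamma_\mu$ through the chain $T^*(Q/G)\to T^*Q/G\to T^*Q$ to obtain a $G$-invariant section $\gamma$, and then observe that the Hamilton--Jacobi condition $\dd H+\beta|_{\Im\gamma}=0$ is inherited from the reduced equation because $H=H_G\circ\pi$, $\beta=\pi^*\beta_G$, and $H_G,\beta_G$ are in turn related to $\tilde H_\mu,\tilde\beta_\mu$ via $\Psi$.

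The difference is one of completeness rather than strategy. The paper's own proof is extremely terse: it does not argue item (i) at all inside the proof (relying instead on the general coisotropic-reduction discussion preceding the proposition), and its verification of the Hamilton--Jacobi equation is a one-line pointwise identity. Your version fills in the two places the paper glosses over, namely the dimension count and isotropy argument showing $\mathfrak L=\pi^{-1}(\bar{\mathfrak L})$ is Lagrangian (not merely isotropic), and the explicit use of freeness of the $G$-action to produce a genuine smooth section $\gamma:Q\to T^*Q$ with $\Im\gamma=\mathfrak L$. These are worthwhile additions; the paper's argument implicitly assumes them.
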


\begin{proof}
The mapping $\tilde{\gamma}_\mu: Q/G\to T^*(Q/G)$ induces a $G$-invariant mapping
\begin{equation}
  \tilde{\gamma}=\pi_G^*\tilde \gamma_\mu:Q \to T^*(Q/G),
\end{equation}
which in turn induces a $G$-invariant mapping $\gamma:Q\to T^*Q$. Let $\tilde \alpha_{\tilde q} \in \tilde{\mathfrak{L}}$, so $(\tilde{\alpha}_{\tilde q}, [\mu]_{\tilde q}) \in T^*(Q/G) \times_{Q/G} \tilde{\mathfrak{g}}^*$, and 
\begin{equation}
  \pi^{-1} \circ \Psi^{-1} (\tilde{\alpha}_{\tilde q}, [\mu]_{\tilde q}) \in \mathfrak{L},
\end{equation}
and then
\begin{equation}
  \left(\tilde{\beta}_\mu + \dd \tilde{H}_\mu   \right) (\tilde{\alpha}_{\tilde{q}})
  = \left(\dd H_G + \beta_G  \right) \left(\Psi^{-1}(\tilde{\alpha}_{\tilde{q}}, \mu)\right) = 0.
\end{equation}

\end{proof}

\begin{example}[Calogero-Moser system with a linear Rayleigh force]
Consider the linear Rayleigh system $(H, \hat{R})$ on $T^*\RR^2$, with
\begin{equation}
  H = \frac{1}{2} \left(p_1^2 + p_2^2 + \frac{1}{(q^1-q^2)^2}  \right),
\end{equation}
and
\begin{equation}
  \hat{R} = \left(\frac{\partial } {\partial q^1 } + \frac{\partial  } {\partial q^2}  \right) \otimes \left(\dd q^1 - \dd q^2  \right),
\end{equation}
so
\begin{equation}
  \tilde{R} = (p_1 + p_2) (\dd q^1- \dd q^2).
\end{equation}
Consider the action of $\RR$ on $\RR^2$ given by
\begin{equation}
\begin{aligned}
  \Phi : \RR \times \RR^2 &\to \RR^2\\
  \left(r,(q^1, q^2)  \right) & \mapsto (r+q^1, r+q^2).  
\end{aligned}
\end{equation}
Clearly, $(H, \tilde{R})$ is invariant under the corresponding lifted action on $T^*Q$. The momentum map is
\begin{equation}
\begin{aligned}
  J: T^*\RR^2 &\to \RR\\
  (q^1, q^2, p_1, p_2) & \mapsto p_1 + p_2.
\end{aligned}
\end{equation}
Then
\begin{equation}
  J^{-1}(\mu) = \left\{(q^1, q^2, p_1, p_2) \in T^*\RR^2 \mid p_2 = \mu - p_1  \right\}.
\end{equation}
We can identify $J^{-1}(\mu)/\RR$ with $\RR^2$, with coordinates $(q,p)$ and the natural projection
\begin{equation}
\begin{aligned}
    \pi : J^{-1}(\mu) &\to J^{-1}(\mu)/\RR\\
    (q^1, q^2, p, \mu-p) &\mapsto (q=q^1-q^2, p).
\end{aligned}
\end{equation}
We can introduce a reduced Hamiltonian
\begin{equation}
  h = \frac{1}{2} \left((\mu-p)^2 + p^2 + \frac{1}{q^2}\right),  
\end{equation}
and a reduced external force
\begin{equation}
  r = \mu \dd q.
\end{equation}
Let $\tilde{\gamma}$ be a closed 1-form on $\RR$. Then
\begin{equation}
  \tilde{\gamma}^*(\dd h + r) = \left(\mu - \frac{1}{q^3} -\mu \frac{\partial \tilde{\gamma}} {\partial q} \right) \dd q,
\end{equation}
so $\tilde{\gamma}$ is a solution of the Hamilton-Jacobi problem for $(h, r)$ if and only if
\begin{equation}
  \mu - \frac{1}{q^3} -\mu \frac{\partial \tilde{\gamma}} {\partial q} = 0,
\end{equation}
and hence
\begin{equation}
  \tilde{\gamma}_\lambda = \left(q + \frac{1}{2\mu q^2} + \lambda\right) \dd q
\end{equation}
is a complete solution depending on the parameter $\lambda \in \RR$. The associated generating function is
\begin{equation}
  \tilde S_\lambda (q) = \frac{1}{2} q^2 - \frac{1}{2\mu q} + \lambda q ,
\end{equation}
where, without loss of generality, we have taken the integration constant as zero. We can reconstruct a complete solution $\gamma_\lambda$ of the Hamilton-Jacobi for $(H, \tilde R)$, given by
\begin{equation}
  \gamma_\lambda =  \left(q^1 - q^2 + \frac{1}{2\mu (q^1-q^2)^2} + \lambda\right) \dd q^1
  +  \left(\mu - q^1 + q^2 - \frac{1}{2\mu (q^1-q^2)^2} - \lambda\right) \dd q^2,
\end{equation}
and the associated generating function is
\begin{equation}
  S_\lambda (q^1, q^2) = \tilde S_\lambda(q^1-q^2) + \mu q^2.
\end{equation}
\end{example}

\section{\v{C}aplygin systems} \label{section_Chaplygin}

A nonholonomic mechanical system is given by a Lagrangian function $L=L\left(q^{A}, \dot{q}^{A}\right)$ subject to a family of constraint functions
\begin{equation}
  \Phi^{i}\left(q^{A}, \dot{q}^{A}\right)=0,1 \leq i \leq m \leq n=\operatorname{dim} Q.  
\end{equation}
For the sake of simplicity, we shall assume that the constraints $\Phi^{i}$ are linear in the velocities, i.e., $\Phi^{i}\left(q^{A}, \dot{q}^{A}\right)=\Phi_{A}^{i}(q) \dot{q}^{A}$. Then the nonholonomic equations of motion are
\begin{equation}
\begin{aligned}
  &\frac{\dd}{\dd t}\left(\frac{\partial L}{\partial \dot{q}^{A}}\right)-\frac{\partial L}{\partial q^{A}} =\lambda_{i} \Phi_{A}^{i}(q), \quad &1 \leq A \leq n, \\
  &\Phi^{i}\left(q^{A}, \dot{q}^{A}\right) =0, \quad &1 \leq i \leq m,
\end{aligned}
\end{equation}
where $\lambda_{i}=\lambda_{i}\left(q^{A}, \dot{q}^{A}\right), 1 \leq i \leq m$, are Lagrange multipliers to be determined.

Geometrically, the constraints are given by a vector subbundle $M$ of $T Q$ locally defined by $\Phi^{i}=0$. The dynamical equations can then be rewritten intrinsically as
\begin{equation}
\begin{aligned}
      &\contr{X} \omega_{L}-\dd E_{L} \in S^{*}\left((T M)^{0}\right), \\
      &X  \in T M.
\end{aligned}
\end{equation}

Under certain compatibility conditions, the vector field $X$ is unique and it is denoted by $X_{\mathrm{nh}}$.

A \emph{\v{C}aplygin system} (also spelled as Chaplygin) is a nonholonomic mechanical system such that:
\begin{enumerate}
\item the configuration manifold $Q$ is a fibred manifold, say $\rho: Q \longrightarrow N$, over a manifold $N$;
\item the constraints are provided by the horizontal distribution of an Ehresmann connection $\Gamma$ in $\rho$;
\item the Lagrangian $L: T Q \longrightarrow \mathbb{R}$ is $\Gamma$-invariant.
\end{enumerate}

A particular case is when $\rho: Q \longrightarrow N=Q / G$ is a principal $G$-bundle and $\Gamma$ a principal connection. As a matter of fact, some references \cite{cantrijn_geometry_2002,bloch_nonholonomic_1996,cantrijn_reduction_1999,koiller_reduction_1992} restrict their definition of \emph{\v{C}aplygin system} to this particular case. Our more general definition is also considered in references \cite{de_leon_geometry_1996,iglesias-ponte_towards_2008}.

Let us recall that the connection $\Gamma$ induces a Whitney decomposition $T Q=\mathcal{H} \oplus V \rho$ where $\mathcal{H}$ is the horizontal distribution, and $V \rho=\mathrm{ker} T \rho$ is the vertical distribution. Take fibred coordinates $\left(q^{A}\right)=\left(q^{a}, q^{i}\right)$ such that $\rho\left(q^{a}, q^{i}\right)=\left(q^{a}\right)$.

With a slight abuse of notation, let $\hor: T Q \to \mathcal{H}$ denote the horizontal projector hereinafter, and let $\Gamma_{a}^{i}=\Gamma_{a}^{i}\left(q^{A}\right)$ the Christoffel components of the connection $\Gamma$. Let us recall that $\Gamma$ may be considered as a $(1,1)$-type tensor field on $Q$ with $\Gamma^2 = \mathrm{id}$, so $\hor = \hor^2 = (1/2)(\mathrm{id}+\Gamma)$. The curvature of $\Gamma$ is the $(1,2)$-tensor field $\mathfrak{R}=\frac{1}{2}[\hor, \hor]$ where $[\hor, \hor]$ is the Nijenhuis tensor of $\hor$ \cite{leon_methods_1989}, that is
\begin{equation}
  \mathfrak{R}(X, Y)=[\hor X, \hor Y]-\hor[\hor X, Y]-\hor[X, \hor Y]+\hor^{2}[X, Y],
\end{equation}
for each pair of vector fields $X$ and $Y$ on $N$. Locally,
\begin{equation}
  \mathfrak{R}\left(\frac{\partial}{\partial q^{a}}, \frac{\partial}{\partial q^{b}}\right)=\mathfrak{R}_{a b}^{i} \frac{\partial}{\partial q^{i}},
\end{equation}
where
\begin{equation}
  \mathfrak{R}_{a b}^{i}=\frac{\partial \Gamma_{a}^{i}}{\partial q^{b}}-\frac{\partial \Gamma_{b}^{i}}{\partial q^{a}}+\Gamma_{a}^{j} \frac{\partial \Gamma_{b}^{i}}{\partial q^{j}}-\Gamma_{b}^{j} \frac{\partial \Gamma_{a}^{i}}{\partial q^{j}}.
\end{equation}
The constraints are given by
\begin{equation}
  \Phi^{i}=\dot{q}^{i}+\Gamma_{a}^{i} \dot{q}^{a}=0,
\end{equation}
 that is, the solutions are horizontal curves with respect to $\Gamma$.

Since the Lagrangian is $\Gamma$-invariant,
\begin{equation}
  L \left((Y^{\mathcal{H}})_{q_1}  \right) =  L \left((Y^{\mathcal{H}})_{q_2}  \right),
\end{equation}
for all $Y\in T_yN,\ y=\rho(q_1)=\rho(q_2)$, where $Y^\mathcal{H}$ denotes the horizontal lift of $Y$ to $Q$. We can then introduce a function $\ell$ on $TN$ such that
 \begin{equation}
   \ell(Y_y) = L \left((Y^{\mathcal{H}})_q  \right),
 \end{equation}
  so locally we have
 \begin{equation}
   \ell\left(q^{a}, \dot{q}^{a}\right)=L\left(q^{a}, q^{i}, \dot{q}^{a},-\Gamma_{a}^{i} \dot{q}^{a}\right).
 \end{equation}
Now let $\rho(q)=y$ and $x \in \mathcal{H}$ with $\tau_{Q}(x)=q$; 
let $u\in T_y N,\ U\in T_u(TN)$ and $X\in  T_x(TQ)$ such that $X$ projects onto
\begin{equation}
  \mathfrak{R}\left(\left(u^{\mathcal H}\right)_{q},\left(T \tau_{M}(U)\right)_{q}^{\mathcal H}\right) \in T_{q} Q.
\end{equation}
We can then introduce a 1-form $\upalpha$ on $TN$ such that
\begin{equation}
 \left( \upalpha  \right)_u(U)= -\left(\theta_L  \right)_u (X),
\end{equation}
where $\theta_L$ is the Poincaré-Cartan 1-form associated with $L$, given by Eq.~\eqref{Poincare_Cartan_1}.
In other words, $\upalpha$ is locally given by
\begin{equation}
  \upalpha=\left(\frac{\partial L}{\partial \dot{q}^{i}} \dot{q}^{b} \mathfrak{R}_{a b}^{i}\right) \dd q^{a}.
\end{equation}
It can be shown that $\ell$ is a regular Lagrangian, and that the \v{C}aplygin system is equivalent to the forced Lagrangian system $(\ell,\upalpha)$.

Assume that $\ell$ is hyperregular. Then, the \v{C}aplygin system has an associated forced Hamiltonian system $(h, \upbeta)$, with
\begin{equation}
  h = \ell \circ \Leg_{\ell},
\end{equation}
and
\begin{equation}
  \upbeta = \left( \Leg_{\ell}^{-1} \right)^* \upalpha.
\end{equation}
In particular, if $L$ is natural we have
\begin{equation}
  \upbeta =  g^{jb} p_i p_j \mathfrak{R}_{a b}^{i} \dd q^{a}.
\end{equation}
The Hamilton-Jacobi equation for $(h, \upbeta)$ is thus locally
\begin{equation}
  \frac{\partial V} {\partial q^a} + g^{jk} \gamma_k \frac{\partial  \gamma_j} {\partial q^a}
  + g^{jb} \gamma_k \gamma_j \mathfrak{R}^k_{ab} = 0.
\end{equation}
In particular, if $L$ is purely kinetical, 
\begin{equation}
  g^{jk}  \frac{\partial  \gamma_j} {\partial q^a}
  + g^{jb}  \gamma_j \mathfrak{R}^k_{ab} = 0.
\end{equation}

Let $D$ denote a distribution on $Q$ whose annihilator is 
\begin{equation}
   D^0 = \operatorname{span} \left\{ \mu^i = \Phi^i_A (q)\dd q^A \right\}.
 \end{equation} 
 Then, we can form the algebraic ideal $\mathcal{I}(D^0)$ in the algebra $\Lambda^*(Q)$, namely, if a $k$-form $\nu \in \mathcal{I}(D^0)$, then
 \begin{equation}
    \nu = \upbeta_i \wedge \mu^i,
 \end{equation}
 where $\upbeta_i\in \Lambda^{k-1}(q)$ and $1\leq i \leq m$.

 \begin{theorem}[Iglesias-Ponte, de León, Martín de Diego {\cite[Theorem 4.3]{iglesias-ponte_towards_2008}}]
 Let $\mathcal{H}$ denote the horizontal distribution defined by the connection $\Gamma$ in $\rho:Q\to N$.
 Let $X$ be vector field on $Q$ such that $X(Q) \subset \mathcal{H}$ and $\dd (\Leg \circ X) \in \mathcal{I}(\mathcal{H}^0)$. Then the following conditions are equivalent:
 \begin{enumerate}
 \item for every curve $\sigma: \mathbb{R} \to Q$ such that
 \begin{equation}
   \dot{\sigma}(t)=T \tau_{Q}\circ X_{\mathrm{nh}}\circ X\circ \sigma(t)
 \end{equation}
 for all $t$, then $X \circ \sigma$ is an integral curve of $X_{\mathrm{nh}}$;
 \item $\dd\left(E_{L} \circ X\right) \in \mathcal{H}^{0}$.
 \end{enumerate}
 A vector field $X$ satisfying these conditions is called a solution of the nonholonomic Hamilton-Jacobi problem for $(L, \Gamma)$.
 \end{theorem}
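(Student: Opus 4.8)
The plan is to run the same pull-back argument used for Theorem~\ref{theorem_HJ} and its Lagrangian analogues, with the symplectic equation replaced by the nonholonomic one, $\contr{X}\omega_L-\dd E_L\in S^*((TM)^0)$ with $X\in TM$, and with the horizontal distribution $\mathcal H=M$ playing the role that ``closed'' played before. I set $\gamma=\Leg\circ X$, a $1$-form on $Q$; since $\omega_L=\Leg^*\omega_Q$ and $\theta_Q$ is tautological, $X^*\omega_L=\gamma^*\omega_Q=-\dd\gamma$, so the standing hypothesis $\dd(\Leg\circ X)\in\mathcal I(\mathcal H^0)$ says precisely that $X^*\omega_L(Z_1,Z_2)=0$ for all horizontal $Z_1,Z_2$.

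First I would reduce condition (i) to a tangency statement. Because $X_{\mathrm{nh}}$ is a SODE and $X(Q)\subset M$, one has $T\tau_Q\circ X_{\mathrm{nh}}\circ X=X$, so the curve equation in (i) is simply $\dot\sigma=X\circ\sigma$; thus (i) asserts that $X\circ\sigma$ is an integral curve of $X_{\mathrm{nh}}$ for every integral curve $\sigma$ of $X$, i.e.\ that $X$ and $X_{\mathrm{nh}}$ are $X$-related. As in the unconstrained case this is equivalent to $X_{\mathrm{nh}}$ being tangent to $\Im X$: if $X_{\mathrm{nh}}$ is tangent to $\Im X$ then $X_{\mathrm{nh}}|_{\Im X}=TX\circ Z$ for a unique vector field $Z$ on $Q$, and applying $T\tau_Q$ and using $\tau_Q\circ X=\mathrm{id}_Q$ forces $Z=X$.

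Next I would pull the dynamics back along $X$. Writing $\eta=\contr{X_{\mathrm{nh}}}\omega_L-\dd E_L\in S^*((TM)^0)$, this yields
\begin{equation}
  X^*\!\bigl(\contr{X_{\mathrm{nh}}}\omega_L\bigr)=\dd(E_L\circ X)+X^*\eta ,
\end{equation}
and two facts must be checked. First, in fibred coordinates $(q^a,q^i)$ adapted to $\rho$ the annihilator $(TM)^0$ is generated by the $\dd\Phi^i$ with $\Phi^i=\dot q^i+\Gamma^i_a\dot q^a$, and $S^*(\dd\Phi^i)=\mu^i:=\dd q^i+\Gamma^i_a\dd q^a$, the semibasic lifts of the generators of $\mathcal H^0$; since $X^*\mu^i=\mu^i$, we get $X^*\eta\in\mathcal H^0$. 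Second, $\contr{X}(X^*\omega_L)\in\mathcal H^0$, because for horizontal $Z$ one has $\bigl(\contr{X}(X^*\omega_L)\bigr)(Z)=(X^*\omega_L)(X,Z)=0$ ($X$ is horizontal and $X^*\omega_L$ kills horizontal pairs). I then introduce the defect $1$-form $\delta:=X^*(\contr{X_{\mathrm{nh}}}\omega_L)-\contr{X}(X^*\omega_L)$; a direct evaluation gives $\delta_q(v)=(\omega_L)_{X(q)}\bigl(D_q,T_qX(v)\bigr)$ with $D_q:=X_{\mathrm{nh}}(X(q))-T_qX(X(q))$. Since $X_{\mathrm{nh}}$ is a SODE and $\tau_Q\circ X=\mathrm{id}_Q$, $D_q\in\ker T\tau_Q$; since $X_{\mathrm{nh}}\in TM$ and $\Im X\subset M$, $D_q$ is tangent to $M$ — hence $D_q$ is the vertical lift of a horizontal vector $w_q\in\mathcal H_q$, and in coordinates $\delta_q=-W_{AB}(X(q))\,w_q^B\,\dd q^A$, with $(W_{AB})$ the Hessian of $L$. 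In particular $\delta\equiv0$ exactly when every $w_q$ vanishes, i.e.\ exactly when $X_{\mathrm{nh}}$ is tangent to $\Im X$.

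Finally the equivalence: if (i) holds then $D_q=0$, so $\delta=0$ and $\dd(E_L\circ X)=\contr{X}(X^*\omega_L)-X^*\eta\in\mathcal H^0$, which is (ii); conversely, if (ii) holds, the pulled-back dynamics together with the first fact give $X^*(\contr{X_{\mathrm{nh}}}\omega_L)\in\mathcal H^0$, hence $\delta\in\mathcal H^0$, so $W_{AB}(X(q))\,w_q^B\,v^A=0$ for every horizontal $v$, and since $w_q$ is itself horizontal the \v{C}aplygin compatibility condition — that the Hessian of $L$ restricts to a nondegenerate form on $\mathcal H$ (for a natural Lagrangian, that the kinetic metric is nondegenerate on $\mathcal H$), which is exactly what makes $X_{\mathrm{nh}}$ well defined — forces $w_q=0$, hence $D_q=0$ and $X_{\mathrm{nh}}$ tangent to $\Im X$, which is (i). The hard part will be the penultimate step, identifying $\delta$ with the obstruction to tangency: one has to check that $D_q$ is really the vertical lift of a \emph{horizontal} vector and then convert $\delta\in\mathcal H^0$ into $w_q=0$ via the compatibility condition; everything else is the bookkeeping already familiar from the forced Hamilton--Jacobi theorems above. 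For the special case $X=\bar X^{\mathcal H}$ with $\bar X$ a vector field on $N$, one could alternatively deduce the result from the Lagrangian Hamilton--Jacobi theorem of Section~\ref{section_HJ_Lagrangian} applied to the associated forced system $(\ell,\upalpha)$.
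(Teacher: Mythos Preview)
The paper does not prove this theorem: it is quoted with attribution to \cite[Theorem~4.3]{iglesias-ponte_towards_2008} and no proof is supplied here, so there is nothing in the present paper to compare your argument against.

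For what it is worth, your sketch is sound. The reduction of (i) to tangency of $X_{\mathrm{nh}}$ to $\Im X$ via the SODE property, the identity $X^*\omega_L=-\dd(\Leg\circ X)$, the observation that $S^*\bigl((TM)^0\bigr)$ is spanned along $M$ by the semibasic forms $\mu^i=\dd q^i+\Gamma^i_a\,\dd q^a$ so that $X^*\eta\in\mathcal H^0$, and the computation of the defect $\delta$ as $-W_{AB}(X(q))\,w^B\,\dd q^A$ with $w\in\mathcal H_q$, all go through as you describe. The point you flag as ``hard'' --- that $D_q=X_{\mathrm{nh}}(X(q))-T_qX(X(q))$ is the vertical lift of a \emph{horizontal} vector --- follows because $D_q$ is $\tau_Q$-vertical (both terms project to $X(q)$) and lies in $T_{X(q)}M$ (since $X_{\mathrm{nh}}$ is tangent to $M$ and $\Im X\subset M$), and for a linear subbundle $M=\mathcal H\subset TQ$ the vertical part of $T_{X(q)}M$ is exactly the vertical lift of $\mathcal H_q$. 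The compatibility condition you invoke at the end (nondegeneracy of the Hessian restricted to $\mathcal H$) is indeed the standard hypothesis under which $X_{\mathrm{nh}}$ exists and is unique, so it is available. Your closing remark, that for $X=\bar X^{\mathcal H}$ one can alternatively pass through the forced system $(\ell,\upalpha)$, is precisely the content of the next theorem the paper quotes from the same reference.
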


\begin{theorem}[Iglesias-Ponte, de León, Martín de Diego {\cite[Theorem 4.5]{iglesias-ponte_towards_2008}}]
Assume that a vector field $X$ on $Q$ is a solution for the nonholonomic Hamilton-Jacobi problem for $(L, \Gamma)$. If $X$ is $\rho$-projectable to a vector field $Y$ on $N$ and $\gamma = \left(\Leg_{\ell}  \right)_{*}Y$ is closed, then $Y$ is a solution of the Lagrangian Hamilton-Jacobi problem for $(\ell, \upalpha)$, and $\gamma$ is a solution of the Hamilton-Jacobi problem for $(h,\upbeta)$.

Conversely, let $\gamma$ be a solution of the Hamilton-Jacobi problem for $(h, \upbeta)$. Then $Y=\Leg^{-1}\circ \gamma$ is a solution of the Lagrangian Hamilton-Jacobi problem for $(\ell, \upalpha)$. Then, if 
\begin{equation}
  \dd\left(\Leg_L \circ Y^{\mathcal{H}}\right) \in \mathcal{J}\left(\mathcal{H}^{0}\right),
\end{equation}
then the horizontal lift $Y^{\mathcal{H}}$ is a solution for the nonholonomic Hamilton-Jacobi problem for $(L, \Gamma)$.

\end{theorem}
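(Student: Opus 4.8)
\emph{Strategy.} The plan is to factor the nonholonomic Hamilton-Jacobi problem for the \v{C}aplygin system $(L,\Gamma)$ through the reduced forced Lagrangian problem for $(\ell,\upalpha)$ on $TN$, and then through the equivalence between the Lagrangian and Hamiltonian Hamilton-Jacobi problems of Section~\ref{section_HJ_Lagrangian}. Two facts are used as black boxes. First, that the \v{C}aplygin system is equivalent to the forced Lagrangian system $(\ell,\upalpha)$ in the strong sense that the nonholonomic dynamics $X_{\mathrm{nh}}$, a SODE on the constraint submanifold $M=\mathcal H\subset TQ$, and the forced Euler-Lagrange field $\xi_{\ell,\upalpha}$ on $TN$ are intertwined by the fibrewise isomorphism $T\rho|_{\mathcal H}\colon\mathcal H\to TN$, whose fibrewise inverse over $q\in Q$ is the horizontal lift $u\mapsto(u^{\mathcal H})_q$; in particular, horizontal lifts of integral curves of $\xi_{\ell,\upalpha}$ are integral curves of $X_{\mathrm{nh}}$, and conversely. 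Second, since $\ell$ is hyperregular, the equivalence theorem of Section~\ref{section_HJ_Lagrangian} applies to $(\ell,\upalpha)$ and its Hamiltonian counterpart $(h,\upbeta)$: a vector field $Y$ on $N$ is a (weak) solution of the Lagrangian Hamilton-Jacobi problem for $(\ell,\upalpha)$ if and only if $\gamma=\Leg_\ell\circ Y$ is a (weak) solution of the Hamilton-Jacobi problem for $(h,\upbeta)$. It therefore suffices to establish the equivalence between the nonholonomic problem for $(L,\Gamma)$ and the Lagrangian problem for $(\ell,\upalpha)$.

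\emph{From nonholonomic to Lagrangian.} Suppose $X$ is a solution of the nonholonomic Hamilton-Jacobi problem for $(L,\Gamma)$, that $X$ is $\rho$-projectable to $Y$, and that $\gamma=\Leg_\ell\circ Y$ is closed. A solution is horizontal, $X(Q)\subset\mathcal H$, so $\rho$-projectability forces $X(q)=(Y_{\rho(q)})^{\mathcal H}_q$, that is, $X=Y^{\mathcal H}$. Since $X_{\mathrm{nh}}$ is a SODE, condition~(i) of the preceding theorem (Theorem~4.3 of~\cite{iglesias-ponte_towards_2008}) says precisely that for every integral curve $\sigma$ of $X$ on $Q$ the curve $X\circ\sigma$ is an integral curve of $X_{\mathrm{nh}}$. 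Projecting by $\rho$, the curve $c:=\rho\circ\sigma$ is an integral curve of $Y$ on $N$; conversely, every integral curve of $Y$ arises this way, as the $\rho$-projection of its own horizontal lift, which is an integral curve of $X$. Moreover $T\rho|_{\mathcal H}$ carries $X\circ\sigma$ onto $Y\circ c$, so by the \v{C}aplygin equivalence $Y\circ c$ is an integral curve of $\xi_{\ell,\upalpha}$. Hence $\Im Y$ is invariant by $\xi_{\ell,\upalpha}$, and the proposition characterising generalized (weak) solutions in Section~\ref{section_HJ_Lagrangian} yields $\contr{Y}(Y^*\omega_\ell)=\dd(E_\ell\circ Y)+Y^*\upalpha$. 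Since $\Leg_\ell\circ Y$ is closed we have $Y^*\omega_\ell=0$, so $\dd(E_\ell\circ Y)=-Y^*\upalpha$; together with closedness of $\Leg_\ell\circ Y$ this says $Y$ is a solution of the Lagrangian Hamilton-Jacobi problem for $(\ell,\upalpha)$, and hence, by the equivalence theorem of Section~\ref{section_HJ_Lagrangian}, $\gamma=\Leg_\ell\circ Y$ is a solution of the Hamilton-Jacobi problem for $(h,\upbeta)$.

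\emph{From Hamiltonian back to nonholonomic.} Conversely, let $\gamma$ be a solution of the Hamilton-Jacobi problem for $(h,\upbeta)$. By the equivalence theorem of Section~\ref{section_HJ_Lagrangian}, $Y=\Leg_\ell^{-1}\circ\gamma$ is a solution of the Lagrangian Hamilton-Jacobi problem for $(\ell,\upalpha)$. Assume moreover that $\dd(\Leg_L\circ Y^{\mathcal H})\in\mathcal I(\mathcal H^0)$. The vector field $Y^{\mathcal H}$ is horizontal, $Y^{\mathcal H}(Q)\subset\mathcal H$, so together with this assumption it satisfies the standing hypotheses of Theorem~4.3 of~\cite{iglesias-ponte_towards_2008}; it remains to verify its condition~(i). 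Every integral curve of $Y^{\mathcal H}$ is the horizontal lift $\sigma$ of some integral curve $c$ of $Y$; since $Y$ solves the Lagrangian problem for $(\ell,\upalpha)$, the curve $Y\circ c$ is an integral curve of $\xi_{\ell,\upalpha}$; and by the \v{C}aplygin equivalence its lift along $\sigma$ — the unique curve in $\mathcal H$ over $\sigma$ that $T\rho|_{\mathcal H}$ maps to $Y\circ c$, which is exactly $Y^{\mathcal H}\circ\sigma$ — is an integral curve of $X_{\mathrm{nh}}$. Thus condition~(i) holds for $Y^{\mathcal H}$, and so $Y^{\mathcal H}$ is a solution of the nonholonomic Hamilton-Jacobi problem for $(L,\Gamma)$.

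\emph{Main obstacle.} The only non-routine step is making the \v{C}aplygin equivalence precise at the level of trajectories and reconciling it with the two curve characterisations: one must check that $X_{\mathrm{nh}}$ on $M=\mathcal H\subset TQ$ and $\xi_{\ell,\upalpha}$ on $TN$ are genuinely intertwined by $T\rho|_{\mathcal H}$, and keep careful track of base points, since a curve in $TN$ determines a curve in $\mathcal H$ only after its base curve in $N$ has been lifted horizontally to $Q$. Everything else is routine: that $X_{\mathrm{nh}}$ is a SODE reduces condition~(i) of Theorem~4.3 to ``$\sigma$ is an integral curve of $X$''; the horizontal lift identifies the integral curves of $Y^{\mathcal H}$ with the horizontal lifts of integral curves of $Y$; and the distinction between weak solutions and solutions is controlled by the fact that $Y^*\omega_\ell=0$ precisely when $\Leg_\ell\circ Y$ is closed.
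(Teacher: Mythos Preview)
The paper does not supply its own proof of this theorem: it is quoted verbatim from \cite[Theorem~4.5]{iglesias-ponte_towards_2008} and immediately followed by an example, so there is nothing in the text to compare your argument against line by line.

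That said, your proof is sound and is precisely the argument one would expect. You correctly reduce both directions to the trajectory characterisations (condition~(i) of the preceding Theorem~4.3 on the nonholonomic side, and the proposition characterising weak solutions in Section~\ref{section_HJ_Lagrangian} on the forced-Lagrangian side), and you pass between them via the \v{C}aplygin equivalence $T\rho|_{\mathcal H}\colon\mathcal H\to TN$ intertwining $X_{\mathrm{nh}}$ and $\xi_{\ell,\upalpha}$, which the paper states without proof just before these theorems. The observation that $X_{\mathrm{nh}}$ is a SODE, so that the hypothesis $\dot\sigma=T\tau_Q\circ X_{\mathrm{nh}}\circ X\circ\sigma$ in condition~(i) collapses to ``$\sigma$ is an integral curve of $X$'', is exactly what makes the bookkeeping clean. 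The passage from the Lagrangian problem for $(\ell,\upalpha)$ to the Hamiltonian problem for $(h,\upbeta)$ is then just an invocation of the equivalence theorem of Section~\ref{section_HJ_Lagrangian}, as you say. Your identification of the one genuinely delicate point---that a curve in $TN$ lifts to $\mathcal H$ only after a choice of base curve in $Q$, supplied here by the horizontal lift---is accurate, and you handle it correctly in both directions.
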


\begin{example}[Mobile robot with fixed orientation]

Consider the motion of a robot whose body maintains a fixed orientation with respect to the environment. The robot has three wheels with radius $R$, which turn simultaneously about independent axes, and perform a rolling without sliding over a horizontal floor (see references \cite{cantrijn_geometry_2002,iglesias-ponte_towards_2008,kelly_geometric_1995} for more details).

Let $(x, y)$ denotes the position of the centre of mass, let $\theta$ and $\psi$ denote the steering and rotation angles of the wheels, respectively. Hence, the configuration manifold is $Q=S^{1} \times S^{1} \times \mathbb{R}^{2}$, and the Lagrangian of the system is
\begin{equation}
  L=\frac{1}{2} m \dot{x}^{2}+\frac{1}{2} m \dot{y}^{2}+\frac{1}{2} J \dot{\theta}^{2}+\frac{3}{2} J_{\omega} \dot{\psi}^{2}.
\end{equation}
Here $m$ is the mass, $J$ is the moment of inertia and $J_{\omega}$ is the axial moment of inertia of the robot.

The constraints are induced by the conditions that the wheels roll without sliding, in the direction in which they point, and that the instantaneous contact point of the wheels with the floor have no velocity component orthogonal to that direction, so we have
\begin{equation}
\begin{aligned}
  &\dot{x} \sin \theta-\dot{y} \cos \theta=0, \\
  & \dot{x} \cos \theta+\dot{y} \sin \theta-R \dot{\psi}=0 .
\end{aligned}
\end{equation}
The constraint distribution $\mathcal{D}$ is spanned by
\begin{equation}
    \left\{\frac{\partial}{\partial \theta}, \frac{\partial}{\partial \psi}+R\left(\cos \theta \frac{\partial}{\partial x}+\sin \theta \frac{\partial}{\partial y}\right)\right\}
\end{equation}
The Abelian group $G=\mathbb{R}^{2}$ acts on $Q$ by translations, namely
\begin{equation}
    ((a, b),(\theta, \psi, x, y)) \mapsto(\theta, \psi, a+x, b+y).
\end{equation}
Therefore, we have a principal $G$-bundle $\rho: Q \longrightarrow N=Q / G$ with a principal connection  given by the $\mathfrak{g}$-valued 1-form
\begin{equation}
    \eta =(\mathrm{d} x-R \cos \theta \mathrm{d} \psi) e_{1}+(\mathrm{d} y-R \sin \theta \mathrm{d} \psi) e_{2},
\end{equation}
where $\mathfrak{g}=\RR^2$ is the Lie algebra of $G$, and $\left\{e_{1}, e_{2}\right\}$ is the canonical basis of $\mathbb{R}^{2}$ (identified with $\mathfrak{g}$).

One can show that the reduced forced mechanical system $(\ell, \upalpha)$ on $TN$ is given by
\begin{equation}
  \ell(\theta, \psi, \dot \theta, \dot \psi)=\frac{1}{2} J \dot{\theta}^{2}+\frac{m R^{2}+3 J_{\omega}}{2} \dot{\psi}^{2},
\end{equation}
and $\upalpha$ identically zero. A complete solution of the Hamilton-Jacobi problem for $(\ell, \upalpha)$
is given by
\begin{equation}
\begin{aligned}
  Y_\lambda = \lambda_\theta \frac{\partial  } {\partial \theta} + \lambda_\psi \frac{\partial  } {\partial \psi}.  
\end{aligned}
\end{equation}
Its horizontal lift is 
\begin{equation}
  Y_\lambda^{\mathcal H} = \lambda_\theta \frac{\partial  } {\partial \theta} 
  + \lambda_\psi \left( \frac{\partial  } {\partial \psi} + R \cos \theta \frac{\partial  } {\partial x}
   + R \sin \theta \frac{\partial  } {\partial y}  \right),
\end{equation}
so
\begin{equation}
  \gamma_{\lambda} = \Leg_L \circ  Y_\lambda^{\mathcal H} 
  = \lambda_\theta J \dd \theta 
  + \lambda_\psi \left(    3 J_{\omega} \mathrm{d} \psi + m R \cos \theta \mathrm{d} x+m R \sin \theta \mathrm{d} y \right),
\end{equation}
and thus
\begin{equation}
  \dd \gamma_\lambda = -\lambda_\psi m R \ \mathrm{d} \theta \wedge(\sin \theta \mathrm{d} x-\cos \theta \mathrm{d} y) \in \mathcal{I}\left(\mathcal{H}^{0}\right).
\end{equation}
Hence, $Y^{\mathcal H}_\lambda$ is a complete solution of the Hamilton-Jacobi problem for $(L, \Gamma)$.
\end{example}

\section{Conclusions and outlook}

In this paper we have obtained a Hamilton-Jacobi theory for Hamiltonian and Lagrangian systems with external forces. We have discussed the complete solutions of the Hamilton-Jacobi problem. Our results have been particularized for forces of Rayleigh type. We have presented a dissipative bracket for Rayleigh systems. Furthermore, we have studied the reduction and reconstruction of the Hamilton-Jacobi problem for forced Hamiltonian systems with symmetry. Additionally, we have shown how the Hamilton-Jacobi problem for a \v{C}aplygin system can be reduced to the Hamilton-Jacobi problem for a forced Lagrangain system, in order to obtain solutions of the latter and reconstruct solutions of the former.

In a previous paper we studied the symmetries, conserved quantities and reduction of forced mechanical systems (see reference \cite{de_leon_symmetries_2021}, see also reference \cite{lopez-gordon_geometry_2021}). Making use of results from this paper, one can obtain the constants of the motion in involution of a forced system, and relate them with complete solutions of the Hamilton-Jacobi problem for that system (see Example \ref{Example_fluid_1D_Lagrangian}). Furthermore, Lemma 15 from reference \cite{de_leon_symmetries_2021} has been translated to the Hamiltonian formalism (see Lemma \ref{lemma_g_invariance}), in order to extend the method of reduction of the Hamilton-Jacobi problem \cite{de_leon_hamiltonjacobi_2017} for forced Hamiltonian systems.

In another paper \cite{de_leon_discrete_2021}, we develop a Hamilton-Jacobi theory for forced discrete Hamiltonian systems. Our approach is based on the construction of a discrete flow on $Q\times Q$ (unlike the case without external forces \cite{de_leon_geometry_2018}, where the discrete flow is defined on $Q$). We define a discrete Rayleigh potential. Additionally, we present some simulations and analyse their numerical accuracy.

An additional open problem is the particularization of the results from this paper when the configuration space $Q$ is a Lie group $G$ with Lie algebra $\mathfrak g$.
If $(L,\alpha)$ is a $G$-invariant forced Lagrangian system on $TG$, the forced Euler-Lagrange equations for $(L, \alpha)$ are reduced to the Euler-Poincaré equations with forcing on $\mathfrak{g}$ (see reference \cite{bloch_euler-poincare_1996}).
We also plan to extend our results on forced systems to the Lie algebroid framework, in order to use Atiyah algebroids when the system enjoys symmetries.
Furthermore, we plan to extend the results from this paper for time-dependent forced Lagrangian systems in the framework of cosymplectic geometry (see reference \cite{colombo_symplectic_2021}). Additionally, the applications of the dissipative bracket \eqref{dissipative_bracket} will be studied elsewhere.


\section*{Acknowledgements}
We thank the referee for his/her constructive comments.
The authors acknowledge financial support from the Spanish Ministry of Science and Innovation (MICINN), under grants PID2019-106715GB-C21 and ``Severo Ochoa Programme for Centres of Excellence in R\&D'' (CEX2019-000904-S). Manuel Lainz wishes to thank MICINN and the Institute of Mathematical Sciences (ICMAT) for the FPI-Severo Ochoa predoctoral contract PRE2018-083203. Asier López-Gordón would like to thank MICINN and ICMAT for the predoctoral contract PRE2020-093814. He is also grateful for enlightening discussions on fibre bundles with his friend and colleague Alejandro Pérez-González.

\section*{Data Availability}
Data sharing is not applicable to this article as no new data were created or analysed in this study.

\printbibliography

\end{document}